\newtheorem{dfn}{Definition}
\def\ALG@step%
\date{}
\title{On Some Combinatorial Problems in Cographs}
\author{Kona Harshita \and N. Sadagopan}
\institute{Indian Institute of Information Technology Design and Manufacturing, Kancheepuram, Chennai. \\
\email{\{coe14b016,sadagopan\}@iiitdm.ac.in}
}
\begin{document}
\sloppy
\maketitle
\begin{abstract}
The family of graphs that can be constructed from isolated vertices by disjoint union and graph join operations are called cographs. These graphs can be represented in a tree-like representation termed {\em parse tree or cotree}. In this paper, we study some popular combinatorial problems restricted to cographs. We first present a structural characterization of minimal vertex separators in cographs.  Further, we show that listing all minimal vertex separators and the complexity of some constrained vertex separators are polynomial-time solvable in cographs.  We propose polynomial-time algorithms for connectivity augmentation problems and its variants in cographs, preserving the cograph property. Finally, using the dynamic programming paradigm, we present a generic framework to solve classical optimization problems such as the longest path, the Steiner path and the minimum leaf spanning tree problems restricted to cographs, our framework yields polynomial-time algorithms for all three problems.\\\\
{\bf Keywords: }Cographs, augmentation problems, vertex separators, Hamiltonian path, longest path, Steiner path, minimum leaf spanning tree.
\end{abstract}

\section{Introduction}
\noindent Many scientific problems that arise in practice can be modeled as graph theoretic problems and the solution to which can be obtained through a structural investigation of the underlying graph.  Often, graphs that model scientific problems have a definite structure which inturn help in both structural and algorithmic study.  Special graphs such as bipartite, chordal, planar, cographs etc., have born out of this motivation.  Further, these graphs act as a candidate graph class in understanding the complexity of many classical combinatorial problems, in particular, to understand the gap between NP-complete instances and polynomial-time solvable instances.

It is important to highlight that classical problems such as MIN-VERTEX COVER, MAX-CLIQUE are NP-complete in general graphs, whereas polynomial-time solvable on chordal and cographs.  It is not the case that every NP-complete problem in general graphs is polynomial-time solvable in all special graphs.  For example, the Hamiltonian path, the Steiner tree and the longest path problems remain NP-complete on chordal, planar and $P_5$-free graphs.   For these problems, it is natural to restrict the input further and study the complexity status on subclasses of chordal, planar and $P_5$-free graphs.

The focus of this paper is on cographs, also known as $P_4$-free graphs (graphs that forbid induced $P_4$ ).  Many classical problems such as  STEINER TREE, HAMILTONIAN PATH, LONGEST PATH, MIN-LEAF SPANNING TREE are NP-complete on $P_5$-free graphs.  These results motivated us to look at the complexity status of the above problems in $P_4$-free graphs (cographs).

Cographs are well studied in the literature due its simple structure and it possesses a tree-like representation.   As this tree representation of cographs can be constructed in linear time \cite{corneilrecog}, many classical NP-complete problems have polynomial-time algorithms restricted to cographs.  For instance, HAMILTONIAN PATH (CYCLE) has a polynomial-time algorithm restricted to cographs \cite{corneil}.  Problems such as list coloring, induced subgraph isomorphism and weighted maximum cut remain NP-complete even in cographs. \\
The purpose of this paper is three fold; structural study of cographs from the minimal vertex separator perspective, using these results to present algorithms for listing all minimal vertex separators and to use these results for connectivity augmentation problems and its variants.  We initiate the study of constrained vertex separators in cographs, and show that finding a minimum connected vertex separator and stable vertex separator in cographs are linear-time solvable.

For HAMILTONIAN PATH, LONGEST PATH, STEINER TREE, MIN-LEAF SPANNING TREE, using the {\em parse tree} of cographs, we present polynomial-time algorithms for all of them.  All these problems have a common frame work and make use of the dynamic programming paradigm to obtain an optimum solution.   Our dynamic programming paradigm works with the underlying parse tree, and designing algorithms for graphs by working with the associated tree-like representation has been looked at in \cite{arnborg} for partial $k$-trees.

Given a $k$-vertex (edge) connected graph $G$, the vertex (edge) connectivity augmentation problems ask for a minimum number of edges to be augmented to $G$ so that the resultant graph has the specified vertex (edge) connectivity. This study was initiated by Eswaran et al. in \cite{eshwaran} as it finds applications in the design of robust network design \cite{martin}.

On the complexity front, the $(k+1)$-vertex connectivity augmentation problem of $k$-connected graphs is polynomial-time solvable \cite{vegh}. The edge connectivity augmentation and other related problems are studied in \cite{jordan,nsn,frank,watanabe}.  The algorithm of \cite{vegh} runs in $O(n^7)$ for arbitrary graphs and we present a linear-time algorithm for this problem in cographs.   Connectivity augmentation in special graphs may not preserve the underlying structural properties and hence it is natural to ask for connectivity augmentation algorithms preserving structural properties such as planarity, chordality, $P_4$-freeness.  Towards this end, we shall present a linear-time algorithm for $(k+1)$-vertex connectivity augmentation of $k$-connected graphs in cographs preserving the cograph property.  \\
As far as weighted version of this problem is concerned, it is NP-complete in general graphs \cite{eshwaran,frank}.  We show that weighted version has a polynomial-time algorithm in cographs.   To the best of our knowledge, results presented in this paper do not appear in the literature and we believe that these results convey the message of this paper.

\noindent\textbf{Road map:} In Section \ref{prel}, we shall present the definitions and notation used throughout our work.  We shall present the structural characterization of minimal vertex separators in Section \ref{strucres}. In Section \ref{vcares} and \ref{ecares}, we shall discuss algorithms for connectivity augmentation problems and its variants.  Algorithms for the longest path, the Steiner path and the minimum leaf spanning tree problems are discussed in Section \ref{dpres}.
\section{Preliminaries}
\label{prel}
We shall present graph-theoretic preliminaries first, followed by, definitions and notation related to cographs.

\subsection{Graph-theoretic Preliminaries}
\noindent Throughout our work, we use definitions and notation from \cite{west} and \cite{golumbic}. In this paper, we work with simple, undirected and connected graphs. For a graph $G = (V, E)$, let $V(G)$ denote the vertex set and $E(G)\subseteq \{\{u,v\} ~|~ u,v \in V(G)$ $\}$ denote the edge set.  Let $\overline{G}$ denote the complement of the graph $G$, where $V(\overline{G})=V(G)$ and $E(\overline{G})=\{\{u,v\}~|~\{u,v\}\notin E(G)\}$.  For an edge set $F$, let $G-F$ denote the graph $G=(V,E \setminus F)$ and $G \cup F$ denote the graph $G=(V,E \cup F)$. For $v \in V(G)$, $N_{G}(v) = \{u \in V(G) \mid \{u,v\} \in E(G)\}$ and $\overline{N}_{G}(v) = \{u \in V(G) \mid \{u,v\} \in E(\overline{G})\}$. For $A \subseteq V(G)$ and $v \in V(G)$, let $N_{A}(v)=A \cap N_{G}(v)$. The degree of a vertex $v$ in $G$, denoted as $d_{G}(v)=|N_{G}(v)|$. A graph $H$ is called an induced subgraph of $G$ if for all $u,v \in V(H)$, $\{u,v\} \in E(H)$ if and only if $\{u,v\} \in E(G)$.  For $A \subset V(G)$, let $G[A]$ and $G \setminus A$ denote the induced subgraph of $G$ on vertices in $A$ and $V(G) \setminus A$, respectively. A simple path $P_{uv}$ of a graph $G$ is a sequence of distinct vertices $u=v_1,v_2,\ldots ,v=v_r$ such that $\{v_i,v_{i+1}\} \in E(G), \forall 1 \leq i \leq r-1$ and is denoted by $P_{uv}=(v_1,v_2, \ldots ,v_r)$.  In our work, all paths considered are simple.  Denote a simple path on $n$ vertices by $P_n$.  For a path $P$, let $E(P)$ and $V(P)$ denote the set of edges and vertices, respectively.  For $P_1=(v_1,v_2,\ldots,v_r)$ and $P_2=(w_1,w_2,\ldots,w_s)$ and if $\{v_r,w_1\} \in E(G)$, then $P=(P_1,P_2)$ denote the path $(v_1,v_2,\ldots ,v_r,w_1,\ldots,w_s)$.  A graph $G$ is said to be connected if every pair of vertices in $G$ has a path and if the graph is not connected, it can be divided into disjoint connected components $G_{1},G_{2},\ldots,G_{k}$, $k \geq 2$. A connected component $G_{i}$ is said to be trivial if $|V(G_{i})|=1$ and non-trivial, otherwise. For a connected graph $G$, a subset $S\subset V(G)$ is called a \emph{vertex separator} if $G \setminus S$ is disconnected. A subset $S \subset V(G)$ is called a \emph{minimal vertex separator} if $S$ is a vertex separator and there does not exist a set $S' \subset S$ such that $S'$ is a vertex separator. A subset $S \subset V(G)$ is called a \emph{minimum vertex separator} if it is a minimal vertex separator of least size.   A graph is said to be $k$-connected if there exists a minimum vertex separator of size $k$ in $G$.
\subsection{Cograph Preliminaries}
\noindent We use definitions and notation as in \cite{kirkpatrick,corneil,lerchs}. The graph that can be constructed from isolated vertices by graph join and disjoint union operations recursively is called a cograph.  Also, A graph $G$ is a cograph if every induced subgraph $H$ of $G$ with at least two vertices is either disconnected or the complement to a disconnected graph.  Every cograph can be represented in the form of a binary tree called \emph{parse tree} and is constructed from the operations graph join and disjoint union that are used recursively to construct the cograph.  Each internal node $x$ in the parse tree $T$ is labeled 1 or 0 which indicates the join (1) or union (0) operations in $T$ with respect to the child nodes of $x$.  By construction, parse tree need not be unique.  A unique and normalized form of the parse tree is called \emph{cotree}. For a connected cograph, the root node of the cotree is labelled 1, the children of the node labelled 1 are labelled 0, the children of the node labelled 0 are labelled 1 and so on.  An example is illustrated in Figure \ref{parsetree}.  The root node of $T$ is denoted by $R$.  From the construction of $T$, it can be observed that the set of leaf nodes in $T$ is precisely $V(G)$.  For a node $v\in T$, $N_T(v)=\{w_1,\ldots,w_t\}$, let $G_1,G_2,\ldots,G_t$ denote the subgraphs induced by the leaves in the subtrees rooted at $w_i$ in $T$. If $v$ is labelled 1, then for all $i$, every vertex in $G_i$ is adjacent to every vertex in $G \setminus V(G_i)$ and if $v$ is labelled 0, then no vertex in $G_i$ is adjacent to any vertex in $G\setminus V(G_i)$. For $A \subseteq V(G)$, let $T[A]$ denote the cotree constructed from the cograph $G[A]$.
%\begin{comment}
\begin{figure}[ht]
    \begin{center}
    \includegraphics[height=6cm,width=14cm]{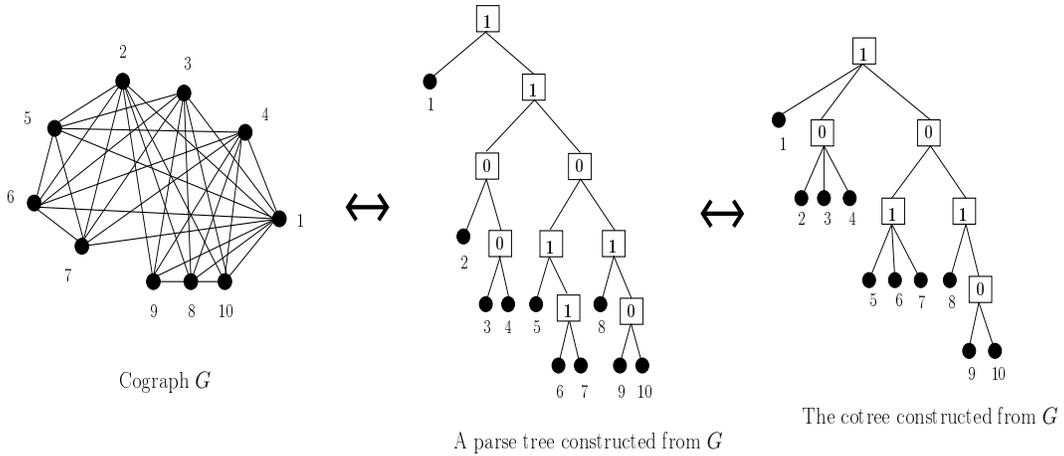}
    \caption{A tree-like representation of a cograph}\label{parsetree}
    \end{center}
\end{figure}
%\end{comment}
\section{Results on Vertex Separators}
\label{strucres}
In this section, we shall present some structural results with respect to minimal vertex separators in cographs.  It is known from \cite{andras} that a graph $G$ is called a \emph{cograph} if and only if $G$ is \emph{$P_4$-free} (forbids an induced path of length of four).
Using cotree representations of cographs, we shall present an algorithm for listing all minimal vertex separators in cographs and our algorithm runs in linear time. Subsequently, we shall also discuss algorithms for constrained vertex separators restricted to cographs.
\begin{lemma}
\label{lem1}
Let $G$ be a $k$-connected cograph and $S$ be the $k$-size minimal vertex separator of $G$ such that $G\setminus S$ has $G_1, G_2, \ldots, G_k, k \geq 2$ connected components. Then, for every edge $\{u,v\}$ in a non-trivial component, $N_{G}(u) \cap S = N_{G}(v) \cap S$.
\end{lemma}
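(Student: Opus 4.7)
The plan is to proceed by contradiction, exploiting the $P_4$-free characterization of cographs mentioned just before the lemma. Suppose $\{u,v\}$ is an edge inside some non-trivial component $G_i$, and, toward a contradiction, that $N_G(u)\cap S \neq N_G(v)\cap S$. By symmetry I may pick a vertex $w\in S$ with $w\in N_G(u)$ and $w\notin N_G(v)$.

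Next I would invoke the minimality of $S$. A standard property of minimal vertex separators is that every $w\in S$ has neighbors in at least two of the components of $G\setminus S$ — otherwise $S\setminus\{w\}$ would already separate $G$, contradicting minimality. Since $w$ is adjacent to $u\in G_i$, I can pick a component $G_j$ with $j\neq i$ and a vertex $x\in V(G_j)$ with $\{w,x\}\in E(G)$.

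The heart of the argument is then to read off the induced $P_4$ on the four vertices $v,u,w,x$. The three required edges $\{v,u\},\{u,w\},\{w,x\}$ are in place by construction. For the three required non-edges: $\{v,w\}\notin E(G)$ by the choice of $w$; $\{u,x\}\notin E(G)$ and $\{v,x\}\notin E(G)$ because $u,v\in V(G_i)$ and $x\in V(G_j)$ lie in different components of $G\setminus S$, so no edges of $G$ cross between them. Hence $G[\{v,u,w,x\}]$ is an induced $P_4$, which contradicts the hypothesis that $G$ is a cograph.

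The main subtlety I expect is simply the appeal to minimality of $S$ to guarantee a neighbor of $w$ in a component other than $G_i$; once that is in hand, the rest is an immediate $P_4$-check. I should also be careful to note that $\{u,v\}$ being an edge in a non-trivial component is exactly what allows me to place both $u$ and $v$ outside $S$ (hence in the same $G_i$), which is what powers the two non-edges to $x$.
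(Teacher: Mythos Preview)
Your proof is correct and follows essentially the same approach as the paper: assume the $S$-neighborhoods differ, then exhibit an induced $P_4$ using the edge $\{u,v\}$, the separator vertex, and a vertex in another component. You are in fact slightly more careful than the paper's argument, which simply picks an arbitrary vertex $y$ in another component without explicitly invoking minimality of $S$ to guarantee the edge from the separator vertex to $y$.
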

\begin{proof}
Suppose $G_1$ is a non-trivial component in $G \setminus S$ and $\{u,v\} \in E(G_1)$.  If, on the contrary, there exists a vertex $x \in S$ such that $\{v,x\} \in E(G)$ and $\{u,x\} \notin E(G)$.   Let $y$ be a vertex in $G_2$.  Clearly, the path $(u,v,x,y)$ is an induced path of length 4, contradicting the definition of cographs.  Hence, the claim follows.  \hfill\(\qed\)
\end{proof}
\begin{dfn}
For a cograph $G$ and $A\subset V(G)$, a vertex $x \in V(G)$ is a \emph{universal vertex} to $A \subset V(G)$, if $\forall v \in A, \{x,v\} \in E(G)$. An edge $\{x,y\} \in E(G)$ is a \emph{universal edge} to $A \subset V(G)$, if $ \forall v \in A, \{x,v\} \in E(G)$ or $\{y,v\} \in E(G)$.
\end{dfn}
\begin{lemma}
\label{lem2}
Let $G$ be a $k$-connected cograph and $S$ be a $k$-size minimal vertex separator in $G$.  Let $G_1, G_2, \ldots, G_k, k \geq 2$ be the connected components in $G \setminus S$. Then, every vertex $x \in S$ is universal to $V(G)\setminus S$.
\end{lemma}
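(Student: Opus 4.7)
The plan is to argue by contradiction. Suppose some $x \in S$ is not universal to $V(G) \setminus S$, i.e.\ there exists $v \in V(G) \setminus S$ with $\{x,v\} \notin E(G)$, and let $G_\ell$ be the component of $G \setminus S$ that contains $v$. I will derive the contradiction in two short steps: first using the minimality of $S$, then using Lemma~\ref{lem1}.

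The first step shows that $x$ has a neighbor in every component of $G \setminus S$, in particular in $G_\ell$. Since $S$ is a minimal vertex separator, $S \setminus \{x\}$ is not a separator, so the induced subgraph $H := G[(V(G) \setminus S) \cup \{x\}]$ is connected. Two distinct components $G_i, G_j$ of $G \setminus S$ have no edge between them in $G$ (otherwise they would lie in the same component of $G \setminus S$), so inside $H$ the vertex sets $V(G_1), \ldots, V(G_k)$ are pairwise non-adjacent, and the only vertex outside $V(G) \setminus S$ is $x$ itself. Hence the connectivity of $H$ forces $x$ to be adjacent to at least one vertex in every $G_i$; in particular, $x$ has some neighbor $u \in V(G_\ell)$.

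The second step promotes this single adjacency to adjacency with all of $V(G_\ell)$, producing the contradiction. If $G_\ell$ is non-trivial, Lemma~\ref{lem1} gives $N_G(a) \cap S = N_G(b) \cap S$ for every edge $\{a,b\} \in E(G_\ell)$; since $G_\ell$ is connected, transitivity along a $u$-to-$v$ path in $G_\ell$ yields $N_G(u) \cap S = N_G(v) \cap S$, so the presence of $x$ in the former forces $x \in N_G(v) \cap S$, contradicting $\{x,v\} \notin E(G)$. If $G_\ell$ is trivial then $G_\ell = \{v\}$, the neighbor $u$ of $x$ in $G_\ell$ guaranteed by the first step must equal $v$, and the edge $\{x,v\}$ contradicts the assumption directly.

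I do not anticipate any real obstacle. The only non-routine observation is that minimality of $S$ forces $x$ to reach \emph{every} component of $G \setminus S$, not merely two as one might naively read off from $k$-connectivity; once that is in place, Lemma~\ref{lem1} closes the argument immediately.
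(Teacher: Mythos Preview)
Your proof is correct and follows essentially the same approach as the paper's: use minimality of $S$ to produce a neighbor of $x$ inside the relevant component, then propagate adjacency along a path via Lemma~\ref{lem1} to reach the non-adjacent vertex and obtain a contradiction. Your write-up is in fact a bit more explicit than the paper's---you spell out why minimality forces $x$ to touch \emph{every} component (via connectivity of $G[(V(G)\setminus S)\cup\{x\}]$) and you handle the trivial-component case uniformly, whereas the paper dispatches that case with a one-line assertion---but the underlying argument is the same.
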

\begin{proof}
It is enough to show that each vertex in $S$ is universal to each $G_i$. If $G_i$ is trivial, then the claim is true.  Suppose, $G_1$ is a non-trivial component in $G \setminus S$.  If, on the contrary, there exists a vertex $x$ in $S$ such that $x$ is not universal to $G_1$.  That is, there exists a vertex $y$ in $G_1$ such that $\{x,y\} \notin E(G)$.  Since $S$ is a minimal vertex separator there must exist $z \not = y$ in $G_1$ such that $\{x,z\} \in E(G)$.   Since $y$ and $z$ belongs to the same connected component $G_i$, there exists a path $P_{zy}=\{z=w_1,w_2,\ldots,w_k=y\}$ in $G_i$.   By \emph{Lemma \ref{lem1}}, $\{x,w_i\} \in E(G), 1 < i \leq k$.   This implies that $\{x,y\} \in E(G)$, which is a contradiction to our earlier observation.  Therefore, the claim follows.  \hfill\(\qed\)
\end{proof}
\begin{corollary}
\label{cor1}
Let $G$ be a $k$-connected cograph and $S$ be a $k$-size minimal vertex separator in $G$.  Then, every edge $\{u,v\}$ in $G \setminus S$ is universal to $S$.
\end{corollary}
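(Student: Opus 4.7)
The plan is to derive this statement as an immediate consequence of Lemma \ref{lem2}, since Lemma \ref{lem2} already asserts the stronger fact that every vertex of $S$ is adjacent to \emph{every} vertex of $V(G)\setminus S$. The corollary only demands, for each edge $\{u,v\}\in E(G\setminus S)$ and each $w\in S$, that at least one of $\{w,u\},\{w,v\}$ belongs to $E(G)$, so a strictly weaker conclusion follows without any new combinatorial work.

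Concretely, I would fix an arbitrary edge $\{u,v\}\in E(G\setminus S)$ and an arbitrary vertex $w\in S$. By definition of $G\setminus S$, both endpoints $u$ and $v$ lie in $V(G)\setminus S$. Invoking Lemma \ref{lem2} on the vertex $w\in S$ then gives $\{w,u\}\in E(G)$ (and, redundantly, also $\{w,v\}\in E(G)$). In particular, at least one of $\{w,u\},\{w,v\}$ lies in $E(G)$, which is precisely the universality condition from the definition. Since $w\in S$ was arbitrary, $\{u,v\}$ is universal to $S$, and since $\{u,v\}$ was an arbitrary edge of $G\setminus S$, the corollary is established.

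There is essentially no obstacle in this argument; the corollary is logically weaker than Lemma \ref{lem2} and the derivation is a one-line unwinding of the definition of a universal edge. The only point worth flagging is that the hypotheses of $k$-connectedness, minimality of $S$, and the existence of $k\geq 2$ components are not used directly in the derivation — they enter solely through the appeal to Lemma \ref{lem2}, whose proof already depends on them.
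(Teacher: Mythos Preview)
Your proposal is correct and follows essentially the same route as the paper: the paper's proof likewise observes that, since by Lemma~\ref{lem2} each vertex of $S$ is universal to every $G_i$, every edge of $G_i$ is automatically universal to $S$. Your version merely unpacks the definition a bit more explicitly, but the argument is the same.
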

\begin{proof}
By \emph{Lemma \ref{lem2}}, each vertex in $S$ is universal to each $G_i$.  It must be the case that every edge in $G_i$ is universal to $S$. \hfill\(\qed\)
\end{proof}
\begin{corollary}
\label{cor2}
Let $G$ be a $k$-connected cograph and $S$ be a $k$-size minimal vertex separator in $G$.  Then, each vertex $v$ in $V(G) \setminus S$ is universal to $S$.
\end{corollary}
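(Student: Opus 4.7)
The plan is to derive this corollary as a direct consequence of Lemma \ref{lem2}, using nothing more than the symmetry of the adjacency relation in an undirected graph. Lemma \ref{lem2} states that every $x \in S$ is universal to $V(G)\setminus S$, i.e., for every $x \in S$ and every $w \in V(G)\setminus S$ we have $\{x,w\} \in E(G)$.

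Fix an arbitrary $v \in V(G)\setminus S$. First I would simply flip the quantifiers in the conclusion of Lemma \ref{lem2}: for every $x \in S$, taking $w = v$ in that lemma yields $\{x,v\} \in E(G)$. Since this holds for every $x \in S$, by the definition of ``universal vertex to a set'' (applied with $A = S$), the vertex $v$ is universal to $S$. As $v$ was arbitrary, the claim follows.

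There is essentially no obstacle here; the content of the corollary is exactly Lemma \ref{lem2} reread from the perspective of the vertices outside $S$ rather than those inside $S$. The only thing to be careful about is that the definition of a vertex being universal to a set $A$ requires adjacency to every element of $A$, so one must observe that Lemma \ref{lem2} indeed provides adjacency for \emph{every} pair $(x,v) \in S \times (V(G)\setminus S)$, which it does.
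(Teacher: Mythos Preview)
Your argument is correct: Corollary~\ref{cor2} is an immediate restatement of Lemma~\ref{lem2} using the symmetry of adjacency, and your derivation makes this explicit and airtight.

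The paper, however, does not cite Lemma~\ref{lem2} here; it says the result follows from Lemma~\ref{lem1} and Corollary~\ref{cor1}. That route is more circuitous: Corollary~\ref{cor1} gives that each \emph{edge} of $G\setminus S$ is universal to $S$ (so for every $x\in S$ at least one endpoint is adjacent to $x$), and Lemma~\ref{lem1} then forces both endpoints of such an edge to have identical neighbourhoods in $S$, hence both are adjacent to all of $S$. One must also separately note that vertices in trivial components of $G\setminus S$ are universal to $S$ (which really comes from Lemma~\ref{lem2} anyway). Your direct appeal to Lemma~\ref{lem2} avoids this detour entirely and is the cleaner justification.
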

\begin{proof}
Follows from \emph{Lemma \ref{lem1}} and \emph{Corollary \ref{cor1}}. \hfill\(\qed\)
\end{proof}
\subsection{Listing all minimal vertex separators in cographs}
We now present an algorithm to list all minimal vertex separators in cographs.  Our algorithm makes use of the underlying cotree and the structural properties presented in the previous section.
\begin{algorithm}[H]
\caption{Enumeration of all Minimal Vertex Separators in a Cograph}
\label{algmvs}
\begin{algorithmic}[1]
\STATE{\textbf{Input:} Cograph $G$, Cotree $T$}
\STATE{\textbf{Output:} All minimal vertex separators in $G$}
\STATE{$R$ be the root of $T$.  $N_T(R)=\{w_1,\ldots,w_t\}$, let $G_1,G_2,\ldots,G_t$ denote the subgraphs induced by the leaves in the subtrees rooted at $w_i$ in $T$.}
\STATE{\textbf{for} $i=1$ to $t$}
	\STATE{\hspace{0.5cm}$S_{i}:=V(G)\setminus V(G_i)$.}
	\STATE{\hspace{0.5cm}Output $S_{i}$ as a minimal vertex separator of $G$}
\end{algorithmic}
\end{algorithm}
\begin{lemma}
\label{mvslemma}
Given a cograph $G$, Algorithm \ref{algmvs} enumerates all minimal vertex separators in $G$.
\end{lemma}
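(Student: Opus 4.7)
The plan is to establish set equality by proving two inclusions: (i) every $S_i$ output by the algorithm is indeed a minimal vertex separator of $G$, and (ii) every minimal vertex separator of $G$ coincides with some $S_i$.

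For inclusion (i), I would fix a child $w_i$ of the root $R$ and argue as follows. Since $G$ is connected, $R$ is a $1$-node (join), so $w_i$ is either a leaf or a $0$-node; in the latter case $G_i$ is the disjoint union of the subgraphs at $w_i$'s children, so $G\setminus S_i=G[V(G_i)]=G_i$ is disconnected and $S_i$ is a separator. For minimality, I take any $x\in S_i$, which belongs to $V(G_j)$ for some $j\neq i$; because $R$ is a join, $x$ is adjacent to every vertex of $V(G_i)$, so in $G\setminus(S_i\setminus\{x\})$ the vertex $x$ is a common neighbour of all components of $G_i$, forcing connectivity. Thus no proper subset of $S_i$ is a separator.

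For inclusion (ii), let $S$ be an arbitrary minimal vertex separator of $G$, let $U=V(G)\setminus S$, and let $C_1,\ldots,C_m$ with $m\ge 2$ be the connected components of $G\setminus S$. By Lemma~\ref{lem2} and Corollary~\ref{cor2}, every vertex of $S$ is adjacent to every vertex of $U$. The crux is to show that $U\subseteq V(G_i)$ for some single index $i$. I would argue by contradiction: if $u\in U\cap V(G_i)$ and $v\in U\cap V(G_j)$ for distinct $i,j$, then the join at $R$ makes $u,v$ adjacent in $G\setminus S$; moreover, for any other $w\in U$ lying in $V(G_k)$, if $k\neq i$ then $w$ is adjacent in $G\setminus S$ to $u$, and otherwise $w$ is adjacent in $G\setminus S$ to $v$. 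Chaining these adjacencies collapses all of $U$ into one component of $G\setminus S$, contradicting $m\ge 2$. Hence $U\subseteq V(G_i)$ for a unique $i$, which gives $S\supseteq V(G)\setminus V(G_i)=S_i$; combined with inclusion (i) and the minimality of $S$, this forces $S=S_i$.

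The main obstacle I anticipate is precisely this "single subtree" step: one must rule out the possibility that the complement of a minimal separator straddles two root-child subtrees, and the argument rests on repeatedly exploiting the join at $R$. A minor technicality to dispense with separately is the degenerate situation in which $w_i$ is a leaf, so $G_i$ is a single vertex and $S_i$ fails to produce two non-empty components; such indices do not correspond to any genuine minimal separator and are never produced by the backward direction, so they can be excluded without affecting the claimed bijection between algorithm outputs and minimal vertex separators.
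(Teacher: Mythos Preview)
Your proof is correct and, for inclusion (i), proceeds exactly as the paper does: the root is a $1$-node, each non-leaf child $w_i$ is a $0$-node so $G_i$ is disconnected, and minimality follows because every vertex outside $V(G_i)$ is joined to all of $V(G_i)$.

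For inclusion (ii) you actually do more than the paper. The paper's proof simply asserts ``Since each $G_i$ yields a minimal vertex separator, our algorithm prints all minimal vertex separators,'' without arguing that an arbitrary minimal separator must equal some $S_i$. Your contradiction argument---showing that if $U=V(G)\setminus S$ met two distinct $V(G_i),V(G_j)$ then the join at $R$ would collapse $U$ into a single component---fills this gap cleanly, and the conclusion $S\supseteq S_i$ together with minimality of $S$ and the fact that $S_i$ is itself a separator yields $S=S_i$. Two minor remarks: your invocation of Lemma~\ref{lem2} and Corollary~\ref{cor2} is technically off, since those are stated only for minimum (size-$k$) separators, but you never actually use that adjacency fact in the argument that follows, so you can simply drop the citation. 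Your handling of the degenerate leaf case (where $w_i$ is a leaf and $S_i$ fails to separate) is also more careful than the paper, which tacitly assumes every $w_i$ has degree at least~$2$.
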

\begin{proof}
Since $G$ is connected, the root node $R$ of $T$ is labelled $1$.   Observe that in any cotree $T$, the children of $R$ are labelled $0$.  Further, labels alternate between $1$ and $0$ as we move down from the root to leaf.  This implies that for all $i$, $G_{i}$ is disconnected. So, any $i$, $V(G) \setminus V(G_i)$ forms a vertex separator $S$.  Note that the set $V(G) \setminus V(G_i)$, on removal leaves the graph $G_i$ which is disconnected as the degree of $w_i$ is at least 2.  Observe that there does not exist a subset $S' \subset S$ such that $G \setminus S'$ is disconnected as every vertex in $G_i$ is adjacent to every vertex in $V(G) \setminus V(G_i)$.  So, the set $S$ output by our algorithm is minimal.   Since each $G_i$ yields a minimal vertex separator, our algorithm prints all minimal vertex separators.   Further, the algorithm runs in linear time.  \hfill\(\qed\)
\end{proof}
\subsection{Constrained vertex separators}
Given a connected graph $G$, a subset $S \subset V(G)$ is a \emph{connected vertex separator} if $S$ is a minimal vertex separator and $G[S]$, the graph induced on $S$, is connected.  If $G[S]$ is an independent set (stable set), then $S$ is a stable vertex separator.  It is known that finding a minimum connected vertex separator in general graphs, and in particular, in  chordality 5 graphs are  NP-complete \cite{narayanaswamy}.   In \cite{manogna}, it is shown that MIN-CONNECTED VERTEX SEPARATOR is polynomial-time solvable in $2K_2$-free graphs which are a strict subclass of chordality 5 graphs.  Finding a minimum stable vertex separator in general graphs is NP-complete \cite{dragan} and polynomial-time solvable restricted to triangle-free graphs and $2K_2$-free graphs \cite{manogna}.   In this paper, we shall present  polynomial-time algorithms for these problems in cographs which are also a strict subclass of chordality 5 graphs. \\

\noindent\textbf{Finding a minimum connected vertex separator:}\\
Note that any minimum connected vertex separator contains a minimal vertex separator as a subgraph.  Further, if the degree of $R$ in $T$ is at least 3, then each minimal vertex separator output by Algorithm \ref{algmvs} is indeed a minimum connected vertex separator in $G$.  Note that by the construction of $T$, any two $G_i$'s is connected, and since $S$ contains at least two $G_i$'s, $G[S]$ is connected.  If the degree of $R$ is two, then $S \cup \{x\}$, where $x \in V(G)\setminus S$ and $S$ is any minimum vertex separator, induces a minimum connected vertex separator in $G$.  This approach, also yields all minimum connected vertex separators in $G$, in linear time.   \\

\noindent\textbf{Finding a minimum stable vertex separator:}\\ Observe that if the degree of $R$ is at least 3,  then any minimal vertex separator $S$ in $G$ contains two $G_i$'s and hence $G[S]$ is not stable.  Therefore, if the degree of $R$ in $T$ is at least 3, then there is no stable vertex separator in $G$.   Let us consider the case where the degree of $R$ is two.  Let $T_1$ and $T_2$ denote the subtrees rooted at the two children of $R$ in $T$. A {\em $l$-star} is a tree on $l$ vertices with one vertex having degree $l-1$ and the other $l-1$ vertices have degree one.  We observe that a stable vertex separator in $G$ exists if and only if either $T_1$ or $T_2$ is a star.   Clearly, the complexity of this approach is linear in the input size.
%If not, then there exists atleast two children labelled $1$ under the root nodes of $T_1$ and $T_2$ which implies vertices are adjacent in $G$, which is a contradiction. If both $T_1$ and $T_2$ are stars, without loss of generality, assume $|G_1|\leq |G_2|$. Then, $V(G_1)$ becomes the minimum stable vertex separator. The time complexity of this algorithm is $O(n)$, which is linear in size of the input cotree.
\section{Vertex Connectivity Augmentation in Cographs}
\label{vcares}
We shall now present algorithms for vertex connectivity augmentation in cographs.  Further, we shall show that our algorithm is optimal by using lower bound arguments on the number of edges augmented.  We shall work with the following notation.  Let $G$ be a cograph and $T$ be its cotree.  Let $G_1,G_2,\ldots,G_t$ denote the subgraphs induced by the leaves in the subtrees rooted at $w_i$ in $T$, $w_i$ is a child of the root node of $T$.
%In this section, we shall first present algorithms for $(k+1)$ and $r$-vertex connectivity augmentation problems preserving cograph property in unweighted version. Later, we solve them in weighted version. In each of the subsequent subsections, we formally define the augmentation problem we considered. Then, we shall present lower bound analysis i.e., minimum number of edges to be augmented to meet the constraints of the augmentation problem. We then give outline of the algorithm and present our algorithm. Further, we give proof of correctness of the algorithm showing it is optimal.
%\noindent \emph{Vertex connectivity} of a graph $G$, written as $\kappa (G)$ is equal to the cardinality of minimum vertex separator. A graph $G$ is said to be \emph{$k$-vertex connected} if removal of any $k-1$ vertices does not disconnect the graph. Throughout this section, connectivity of the graph implies vertex connectivity of the graph and we fix the following notations. Let $G$ be a $k$-connected cograph and let $T$ denote the cotree constructed from $G$. Let $G_1,G_2,\ldots G_t$ denote the induced subgraphs of $G$ on the leaves of the subtrees rooted at the children of the root node in $T$.
\subsection{$(k+1)$-vertex connectivity augmentation}
\label{4.1}
Optimum version of $(k+1)$-vertex connectivity augmentation problem in cographs preserving the cograph property is formally defined as follows:
\begin{center}
\minibox[frame] {
\emph{\textbf{Instance:}} A $k$-vertex connected cograph $G$\\
\emph{\textbf{Question:}} Find a minimum cardinality augmentation set $E_{ca}$ such that $G \cup E_{ca}$ is a \\$(k+1)$-vertex connected cograph
}
\end{center}
\begin{lemma}
\label{lvca}
For every $G_i$ such that $|G_i|=n-k$, let $x_{i}$ be a vertex in $G_i$ such that $|\overline{N}_{G}(x_{i})|$ is minimum.  Then, any $(k+1)$-connectivity augmentation set $E_{ca}$ is such that $|E_{ca}| \geq \sum\limits_{x_{i}} |\overline{N}_{G}(x_{i})|$.
\end{lemma}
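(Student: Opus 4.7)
The plan is to work with the cotree $T'$ of the augmented graph $G' = G \cup E_{ca}$. Applying Lemma~\ref{mvslemma} to $G'$, the fact that $G'$ is $(k+1)$-vertex connected forces every child of the root of $T'$ to induce a subgraph on at most $n - k - 1$ vertices; otherwise $G'$ would possess a minimal vertex separator of size at most $k$. I would use this size bound, together with the join at the roots of both $T$ and $T'$, as the only structural facts about $G'$.

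The first observation is that every edge of $E_{ca}$ lies inside some $V(G_i)$, since the root of $T$ is a join node, so all pairs across distinct $V(G_i), V(G_j)$ are already edges of $G$. Hence $|E_{ca}| = \sum_i |E_{ca} \cap \binom{V(G_i)}{2}|$, and it suffices to prove $|E_{ca} \cap \binom{V(G_i)}{2}| \geq |\overline{N}_G(x_i)|$ for each bad $G_i$ (those with $|V(G_i)| = n - k$). Fix such a $G_i$; since each root-child of $T'$ has at most $n - k - 1 < |V(G_i)|$ vertices, the set $V(G_i)$ spreads over at least two root-children of $T'$, inducing a partition $\pi_i$ of $V(G_i)$ into parts of size at most $n - k - 1$. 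Any non-edge $\{u, v\}$ of $G$ with $u, v$ in different parts of $\pi_i$ is an edge of $G'$ by the join at the root of $T'$, and therefore belongs to $E_{ca}$. So $|E_{ca} \cap \binom{V(G_i)}{2}|$ is at least the number of edges of $\overline{G_i}$ cut by $\pi_i$.

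The remaining sub-claim is that for any partition of $V(G_i)$ into parts of size at most $n - k - 1$, the cut in $\overline{G_i}$ is at least $\delta(\overline{G_i}) = |\overline{N}_G(x_i)|$. I would coarsen $\pi_i$ into the 2-partition $(Q, V(G_i) \setminus Q)$ for any single part $Q$; merging parts only removes cut edges, so the $\pi_i$-cut is at least the cut of $(Q, V(G_i) \setminus Q)$, which in turn is at least the edge connectivity $\lambda(\overline{G_i})$. Since $G_i$ is labelled $0$ in $T$, $\overline{G_i}$ is a connected cograph, and a structural induction on its cotree $\overline{G_i} = H_1 + \cdots + H_p$ (with $p \geq 2$) shows $\lambda(\overline{G_i}) = \delta(\overline{G_i})$. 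Summing the sub-claim over the bad $G_i$'s then yields the lemma. The main obstacle is the inductive step for $\lambda = \delta$, whose trickiest case is when a 2-partition $(A, B)$ lies strictly inside some $V(H_j)$ with $H_j$ disconnected and $A$ coinciding with a union of components of $H_j$: the inside-$H_j$ cut vanishes, so one must squeeze the bound $|E(A, B)| \geq \delta(\overline{G_i})$ out of the join-contribution $|A|(|V(\overline{G_i})| - |V(H_j)|)$ using the component sizes of $H_j$.
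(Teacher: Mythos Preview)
Your argument is correct and in fact more careful than the paper's own proof. The paper argues only informally: it observes that each root-child of the augmented cotree must have at most $n-k-1$ vertices, and then asserts without further justification that for every bad $G_i$ ``we must remove a vertex $x$ from $G_i$ and include $x$ as a child of the root node,'' concluding the cost is at least $\min_{x\in G_i}|\overline{N}_G(x)|$. That step tacitly assumes the optimal augmentation promotes exactly one vertex of $G_i$ to universality, which is not forced a priori.

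Your route is genuinely different. You translate the size bound on root-children of $T'$ into a nontrivial partition $\pi_i$ of each bad $V(G_i)$, observe that $E_{ca}\cap\binom{V(G_i)}{2}$ must contain every $\pi_i$-crossing non-edge of $G$, and then lower-bound the resulting cut in $\overline{G_i}$ by $\lambda(\overline{G_i})=\delta(\overline{G_i})=|\overline{N}_G(x_i)|$. Summing over the (pairwise disjoint) bad $G_i$'s gives the lemma for \emph{every} cograph-preserving augmentation, not just those that move a single vertex. The sub-claim $\lambda(\overline{G_i})=\delta(\overline{G_i})$ is exactly Lemma~\ref{minedgsep} of the paper applied to the connected cograph $\overline{G_i}$, so you may simply cite it and skip the inductive ``trickiest case'' you flag; your sketch of that case is fine (the bound $|A|(|V(\overline{G_i})|-|V(H_j)|)\ge (|A|-1)+(|V(\overline{G_i})|-|V(H_j)|)\ge \delta(\overline{G_i})$ goes through since $p\ge 2$ forces $|V(\overline{G_i})|-|V(H_j)|\ge 1$), but it is redundant given Lemma~\ref{minedgsep}. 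What your approach buys is a rigorous lower bound; what the paper's shortcut buys is brevity, at the price of leaving the bound unproven for augmentations that restructure $G_i$ in any way other than making one vertex universal.
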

\begin{proof}
From  Lemma \ref{mvslemma}, we know that any minimal vertex separator $S$ is such that $S=V(G)\setminus V(G_i)$, for any $i$.  Since $G$ is a $k$-connected graph, $|S| \geq k$, and therefore, for any $i$, $|G_i| \leq n-k$.  Note that in any $(k+1)$-connected cograph, the size of any minimal vertex separator is at least $k+1$.  Therefore, to make $G$ a $(k+1)$-connected cograph $H$, in every $H_i$, we must have $|H_i| \leq n-k-1$ so that for all $i$, $|V(H) \setminus V(H_i)| \geq k+1$.  This implies that for each $G_i$ in $G$ such that $|G_i|=n-k$, we must remove a vertex $x$ from $G_i$ and include $x$ as a child of the root node.  Due to this modification, we must augment all edges from $x \in G_i$ to all vertices in $\overline{N}_{G}(x)$.  To ensure optimum, we remove $x_{i}$ from $G_i$ such that $|\overline{N}_{G}(x_{i})|$ is minimum.
Thus, any $(k+1)$-connectivity augmentation set $E_{ca}$ has atleast $\sum\limits_{x_{i}}|\overline{N}_{G}(x_{i})|$ edges.  \hfill\(\qed\)
\end{proof}
\begin{algorithm}[H]
\caption{$(k+1)$-vertex connectivity augmentation of a cograph}
\label{vca}
\begin{algorithmic}[1]
\STATE{\textbf{Input:} $k$-connected cograph $G$, cotree $T$}
\STATE{\textbf{Output:} $(k+1)$-connected cograph $H$ of $G$}
%\STATE{\texttt{/* Let $G_1,G_2,\ldots G_t$ denote the induced subgraphs of $G$ on the leaves of the subtrees rooted at the children of $R$ in $T$ */}}
\STATE{\textbf{for} $i=1$ to $t$}
\STATE{\hspace{0.5cm}\textbf{if} $|G_i|=n-k$}
\STATE{\hspace{1cm} Find a vertex $x_i\in V(G_i)$ such that $|\overline{N}_{G}(x_i)|$ is minimum}
\STATE{\hspace{1cm} $\forall y \in \overline{N}_{G}(x_i)$, augment the edge $\{x_i,y\}$ to $G$ and update $E_{ca}$}
\STATE{Output the augmented graph $H$}
\end{algorithmic}
\end{algorithm}
{\bf Proof of correctness of Algorithm \ref{vca}:}
In \emph{Steps 4-5}, the algorithm finds all the subgraphs such that $|G_i|=n-k$ and finds a vertex $x_i\in G_i$ such that $|\overline{N}_{G}(x_i)|$ is minimum.  It further augments all the edges between $x_i$ and the vertices in $\overline{N}_{G}(x_i)$ to the augmentation set as given in \emph{Step 6}. Therefore, the algorithm augments $\sum\limits_{x_{i}}|\overline{N}_{G}(x_{i})|$ edges in total.  For every $G_i$ in $G$ such that $|G_i|=n-k$, let $S=V(G)\setminus V(G_i)$ be the minimum vertex separator. Because we remove a vertex $x_i$ from every such $G_i$, in $G\cup E_{ca}$, $S\cup \{x_i\}$ becomes a minimum vertex separator. Therefore, the resultant graph is a $(k+1)$-connected cograph.  Further, the algorithm runs in $O(n)$ time.

\subsection{Weighted $(k+1)$-vertex connectivity augmentation}
\label{4.2}
Optimum version of weighted $(k+1)$-vertex connectivity augmentation problem in cographs preserving the cograph property is formally defined as follows:
\begin{center}
\minibox[frame]{
\emph{\textbf{Instance:}} A $k$-vertex connected cograph $G$ and a weight function $w:E(\overline{G})\rightarrow R^{+}$\\
\emph{\textbf{Question:}} Find a set $E_{wca}$ such that $W_{wca}=\sum\limits _{(u,v)\in E_{wca}}w(u,v)$ is minimum and $G\cup E_{wca}$ is a\\ $(k+1)$-vertex connected cograph
}
\end{center}
\begin{lemma}
For every $G_i$ such that $|V(G_i)|=n-k$, let $x_{i}\in G_i$ be a vertex such that $W(x_{i})=\sum\limits_{y\in \overline{N}_{G}(x_{i})}w(x_{i},y)$ is minimum. Then, any weighted $(k+1)$-connectivity augmentation set $E_{wca}$ is such that $|W_{wca}|\geq \sum\limits_{x_{i}}W(x_{i})$.
\end{lemma}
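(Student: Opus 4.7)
The plan is to adapt the argument of Lemma \ref{lvca} verbatim, replacing cardinalities of augmented edge sets by their total weights. First I would invoke Lemma \ref{mvslemma} to characterize the minimal vertex separators of $G$ as exactly the sets of the form $V(G)\setminus V(G_i)$, where $w_i$ is a child of the root of the cotree $T$. Since $G$ is $k$-connected, $|V(G_i)| \leq n-k$ for every $i$, and the tight components are precisely those with $|V(G_i)| = n-k$. For the augmented graph $H = G \cup E_{wca}$ to be $(k+1)$-connected, every minimal vertex separator of $H$ must have size at least $k+1$, so in the cotree of $H$, the subtree rooted at the ``descendant'' of each tight $G_i$ must contain at most $n-k-1$ leaves.

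Next I would argue, mirroring the reasoning of Lemma \ref{lvca}, that the only way to shrink the set of leaves below $w_i$ while preserving the cograph property is to promote some vertex $x \in V(G_i)$ to become a child of the root of the new cotree. Because the root of $T$ is a join node, $x$ is already adjacent to every vertex of $V(G)\setminus V(G_i)$; promoting $x$ therefore forces the augmentation of exactly the edges of $\{x\} \times \overline{N}_G(x)$, contributing weight $W(x) = \sum_{y \in \overline{N}_G(x)} w(x,y)$ to $W_{wca}$. Since distinct tight components $G_i$ are vertex-disjoint, the edges augmented on account of different tight components are disjoint, and their weights add. Choosing in each tight $G_i$ the vertex $x_i$ that minimizes $W(x_i)$ then yields the stated lower bound
\[
W_{wca} \;\geq\; \sum_{x_i} W(x_i).
\]

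The step I expect to be the main obstacle is justifying that no cheaper augmentation can avoid this promotion mechanism; one might wonder, for instance, whether a few cheap edges added \emph{inside} a tight $G_i$ (making it internally connected) could destroy the minimal separator $V(G)\setminus V(G_i)$ without evicting any vertex of $G_i$. I would rule this out by applying Lemma \ref{mvslemma} to the augmented cograph $H$: if after augmentation some tight $G_i$ still contributes a child $w_i'$ of the root of the cotree of $H$ with $|V(H_i')|=n-k$, then $V(H)\setminus V(H_i')$ is a minimal vertex separator of $H$ of size $k$, contradicting $(k+1)$-connectivity. Thus every tight $G_i$ must lose at least one leaf from its root-level subtree, and the only cograph-preserving way to do so is to make that leaf universal, which is precisely the promotion captured above. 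This closes the argument and yields the weight lower bound.
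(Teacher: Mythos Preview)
Your proposal is correct and follows essentially the same approach as the paper, which likewise invokes Lemma~\ref{lvca} and argues that each tight $G_i$ must surrender a vertex to the root of the cotree at cost $W(x)$, then minimizes over each $G_i$ to obtain the bound. Your write-up is in fact more careful than the paper's brief proof: the paper does not explicitly address the disjointness of the contributions from distinct tight components or the internal-edge objection you anticipate.
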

\begin{proof}
Similar to the proof of Lemma \ref{lvca}, to make $G$ a $(k+1)$-connected graph, for every $G_i$ such that $|G_i|=n-k$, we remove a vertex $x$ from $G_i$ and include them as a child of the root node in $T$.  While doing so,  we augment edges from $x$ to all the vertices in $\overline{N}_{G}(x)$ so that $G_i$ has $n-k-1$ vertices.   To ensure optimality, $x_{i}$ is a vertex in $G_i$ such that $W(x_{i})=\sum\limits_{y\in \overline{N}_{G}(x_{i})}w(x_{i},y)$ is minimum and therefore, the weight of any $E_{wca}$ is atleast $\sum\limits_{x_{i}}W(x_{i})$.  In $G \cup E_{wca}$, every $x_i$ is universal to $V(G)\setminus \{x_i\}$ which implies that all $x_i$ become the children of $R$ in $T[G\cup E_{wca}]$. Thus, the cotree property is preserved. This completes the proof of the lemma.
\end{proof}
\begin{algorithm}[ht]
\caption{Weighted $(k+1)$-vertex Connectivity Augmentation of a Cograph}
\label{wvca}
\begin{algorithmic}[1]
\STATE{\textbf{Input:} $k$-connected cograph $G$, cotree $T$, weight function $w:E(\overline{G})\rightarrow R^{+}$}
\STATE{\textbf{Output:} $(k+1)$-connected cograph $H$ of $G$}
%\STATE{\texttt{/* Let $G_1,G_2,\ldots G_t$ denote the induced subgraphs of $G$ on the leaves of the subtrees rooted at the children of the root node in $T$ */}}
\STATE{\textbf{for} $i=1$ to $t$}
\STATE{\hspace{0.5cm}\textbf{if} $|G_i|=n-k$}
\STATE{\hspace{1cm}Find a vertex $x_i\in V(G_i)$ such that $W(x_i)=\sum\limits_{y\in \overline{N}_{G}(x_i)}w(x_i,y)$ is minimum}
\STATE{\hspace{1cm}$\forall y\in \overline{N}_{G}(x_i)$, augment the edge $\{x_i,y\}$ to $G$ and update $E_{wca}$}
\STATE{Output the augmented graph $H$}
\end{algorithmic}
\end{algorithm}
\begin{comment}
\begin{figure}[h!]
\subsubsection{Trace of the algorithm: (Algorithm \ref{wvca})}
    \begin{center}
    \includegraphics[height=8cm,width=16cm]{wvca.eps}
    \caption{Trace of weighted $(k+1)$-Vertex Connectivity Augmentation Algorithm when $n=12$ and $k=7$}
    \end{center}
\end{figure}
\end{comment}
{\bf Proof of correctness of Algorithm \ref{wvca}:}
In \emph{Steps 4-5}, the algorithm finds all the subgraphs such that $|G_i|=n-k$ and finds a vertex $x_i\in G_i$ such that $W(x_i)=\sum\limits_{y\in \overline{N}_{G}(x_i)}w(x_i,y)$ is minimum.  It further augments all the edges between $x_i$ and the vertices in $\overline{N}_{G}(x_i)$ to the augmentation set in \emph{Step 6}. Therefore, the algorithm augments edges with weight $\sum\limits_{x_{i}}W(x_{i})$. Let $S=V(G)\setminus V(G_i)$ be the minimum vertex separator, for every $G_i$ in $G$ such that $|G_i|=n-k$.  Because we remove a vertex $x_i$ from every such $G_i$, in $G\cup E_{wca}$, $S\cup \{x_i\}$ becomes the minimum vertex separator.  In $T[G\cup E_{wca}]$, every $x_i$ becomes a child of $R$. Thus, the resultant graph is a $(k+1)$-connected cograph and our algorithm is linear in the input size.\\ \\
{\tt Remark:} Results presented in Section \ref{4.2} are a generalization of results presented in Section \ref{4.1}.

%\begin{comment}
\section{Edge Connectivity Augmentation in Cographs}
\label{ecares}
In this section, we shall discuss two variants of edge connectivity augmentation problems in cographs.  For a connected graph $G$, a set $F\subset E(G)$ is called an \emph{edge separator} if $G-F$ is disconnected and $F$ is a \emph{minimum edge separator} if it is an edge separator of least size. The \emph{edge connectivity} of $G$ refers to the size of a minimum edge separator. A connected graph $G$ is said to be \emph{$k$-edge connected} if its edge connectivity is $k$.
\begin{lemma}
\label{minedgsep}
Let $G$ be a cograph. Then, any minimum edge separator $F$ in $G$ is such that $|F|=\delta (G)$, where $\delta (G)$ refers to the minimum degree of $G$.
\end{lemma}
\begin{proof}
Any edge separator in a graph $G$ is obtained by removing all the edges between some $A\subset V(G)$ and $V(G)\setminus A$.
\begin{description}
\item[\textbf{Case 1: }] $|A|=1$. Clearly, by removing edges incident on the minimum degree vertex, the graph is disconnected.  Thus, $|F|=\delta (G)$.
\item[\textbf{Case 2: }] $|A|\geq 2$. Let $y$ denote the cardinality of edge separator when $|A|\geq 2$ and $z$ denote the degree of some vertex $v\in A$, respectively. To prove the claim, we show that $y\geq z$. Consequently, it follows that minimum edge separator in $G$ can be obtained when $|A|=1$. Let $G_1,G_2,\ldots G_t$ denote the induced subgraphs of $G$ on the leaves of the subtrees rooted at the children of the root node in $T$. For all $1\leq i\leq t$, let $X_{i}=V(G_{i})\cap A$. Let $X_1\leq |X_2|\leq \ldots \leq |X_{t}|$ and $v\in X_1$. Clearly, $y=|X_1|(n-|G_1|-|A|+|X_1|)+\sum\limits_{i=1}^{|X_1|}d_{1i}+\ldots +|X_t|(n-|G_t|-|A|+|X_t|)+\sum\limits_{i=1}^{|X_t|}d_{ti}$, where $d_{ti}=|N_{G_i\setminus X_i}(x_{ti})|$ and $x_{ti}$ is the $i$th element in $X_t$. Suppose $v=x_{11}$. Degree of $v$ can at most be $n-|G_1|+|X_1|-1+d_{11}$. On the contrary, assume that $|X_1|(n-|G_1|-|A|+|X_1|)+\sum\limits_{i=1}^{|X_1|}d_{1i}+\ldots +|X_t|(n-|G_t|-|A|+|X_t|)+\sum\limits_{i=1}^{|X_t|}d_{ti}<n-|G_1|+|X_1|-1+d_{11}$ which implies $(|X_1|-1)(n-|G_1|-|A|+|X_1|-1)-|A|+|X_1|+\sum\limits_{i=2}^{|X_1|}d_{1i}+|X_2|(n-|G_2|-|A|+|X_2|)+\sum\limits_{i=1}^{|X_2|}d_{2i}+\ldots +|X_t|(n-|G_t|-|A|+|X_t|)+\sum\limits_{i=1}^{|X_t|}d_{ti}<0$ which is a contradiction. Therefore, cardinality of the edge separator when $|A|\geq 2$ is greater than or equal to the cardinality of the edge separator when $|A|=1$.
\end{description}
Hence, size of any minimum edge separator in a cograph is $\delta (G)$. This completes the proof of the lemma.\hfill\(\qed\)
\end{proof}

\subsection{$(k+1)$-edge connectivity augmentation}

Optimum version of $(k+1)$-edge connectivity augmentation problem in cographs is formally defined as follows:
\begin{center}
\minibox[frame]{
\emph{\textbf{Instance:}} A $k$-edge connected cograph $G$\\
\emph{\textbf{Solution:}} A minimum cardinality augmentation set $E_{ca}$ such that $G\cup E_{ca}$ is a \\$(k+1)$-edge connected graph
}
\end{center}

For a connected graph $G$, a set of edges $E_{c}\subseteq E(G)$ forms an \emph{edge cover} if every vertex of $G$ is incident with at least one edge in $E_{c}$. An edge cover with minimum cardinality is known as \emph{minimum edge cover}. Let $\rho (G)$ denote the cardinality of the minimum edge cover in the graph $G$. For a disconnected graph $G$, let $G_1,G_2,\ldots ,G_k,k\geq 2$ be the connected components. For a trivial component $G_i$, let $\rho (G_i)=1$. Then, we define $\rho (G)=\sum\limits_{i}\rho (G_i)$.

\begin{lemma}
\label{leca}
Let $G$ be a $k$-connected cograph and let $X$ denote the set of $k$-degree vertices in $G$. Then, any $(k+1)$-edge connectivity augmentation set $E_{ca}$ is such that $E_{ca}\geq \rho (\overline{G}[X])$.
\end{lemma}
\begin{proof}
From Lemma \ref{minedgsep}, cardinality of any minimum edge separator is equal to $\delta (G)$. Therefore, to make $G$ a $(k+1)$-connected graph, we must have $\delta (G)=k+1$. This implies, we must increase degree of every $k$-degree vertex atleast by one. Removing any edge from an edge cover leaves an uncovered vertex which implies every edge in the edge cover has one vertex with degree one. Hence, every edge cover has minimum degree one. If $\overline{G}[X]$ is connected, minimum number of edges required to increase degree of every $k$-degree vertex atleast by one is equal to $\rho (\overline{G}[X])$. And if $\overline{G}[X]$ is disconnected, and the connected component is trivial, then we must add an edge from that vertex to some non-adjacent vertex in $G$. If the component is non-trivial, then we must augment atleast cardinality of minimum edge cover number of edges in that component. Therefore, we must augment $\rho (\overline{G}[X])$ number of edges in total. Hence, any $(k+1)$-edge connectivity augmentation set has atleast $\rho (\overline{G}[X])$ edges. This completes the proof of the lemma.\hfill\(\qed\)
\end{proof}

\subsubsection{Outline of the Algorithm}
Our algorithm first finds the set $X$ containing all the $k$-degree vertices in $G$. If $\overline{G}[X]$ is connected, it finds minimum edge cover in $\overline{G}[X]$ and adds all the edges in the edge cover to the augmentation set. If it is disconnected, it traverses through each connected component. If the connected component is trivial, it augments an edge between that vertex and some non-adjacent vertex in $G$. If the connected component is non-trivial, it finds minimum edge cover and augments all the edges in the edge cover.

\subsubsection{The Algorithm}\hfill \\ \\
We now present an algorithm for $(k+1)$-edge connectivity augmentation and further prove that our algorithm is optimal.

\begin{algorithm}[H]
\caption{$(k+1)$-edge Connectivity Augmentation of a Cograph}
\label{eca}
\begin{algorithmic}[1]
\STATE{\textbf{Input:} $k$-connected cograph $G$}
\STATE{\textbf{Output:} $(k+1)$-connected graph $H$ of $G$}
\STATE{Let $X$ be the set of $k$-degree vertices in $G$}
\STATE{\textbf{if} $\overline{G}[X]$ is connected \textbf{then}}
\STATE{\hspace{0.5cm}Find minimum edge cover $E_c$ in $\overline{G}[X]$}
\STATE{\hspace{0.5cm}$\forall \{u,v\}\in E_c$, augment the edge $\{u,v\}$ to $G$ and update $E_{ca}$}

\begin{comment}
\algstore{3}
\end{algorithmic}
\end{algorithm}
\begin{algorithm}[ht]
\begin{algorithmic}[1]
\algrestore{3}
\end{comment}

\STATE{\textbf{else}}
\STATE{\hspace{0.5cm}\textbf{for} each connected component $G_{i}$ in $\overline{G}[X]$}
\STATE{\hspace{1cm}\textbf{if} $|G_{i}|=1$ and $x\in V(G_i)$ \textbf{then}}
\STATE{\hspace{1.5cm}Find a vertex $y\in \overline{N}_{G}(x)$ and augment the edge $\{x,y\}$ to $G$ and update $E_{ca}$}
\STATE{\hspace{1cm}\textbf{else}}
\STATE{\hspace{1.5cm}Find minimum edge cover $E_c$ in $G_{i}$}
\STATE{\hspace{1.5cm}$\forall \{u,v\}\in E_c$, augment the edge $\{u,v\}$ to $G$ and update $E_{ca}$}
\STATE{Output the augmented graph $H$}
\end{algorithmic}
\end{algorithm}

\subsubsection{Proof of correctness of Algorithm \ref{eca}}
In \emph{Step 1}, the algorithm finds the set $X$ containing the set of $k$-degree vertices in $G$. If $\overline{G}[X]$ is connected, it finds minimum edge cover and adds all the edges to the augmentation set in \emph{Steps 4-6}. This ensures degree of all vertices in $X$ is increased atleast by one. If $\overline{G}[X]$ is disconnected, it traverses through all connected components. If the component is trivial, it augments one edge from that vertex to a non-adjacent vertex in $G$ in \emph{Steps 9-10}. If the component is trivial, it finds minimum edge cover in that component and adds those edges in \emph{Steps 12-13} which implies degree of all those vertices is also increased atleast by one. Since finding minimum edge cover can be done in $O(n)$ time in cographs \cite{yu}, where $n$ is the size of the parse tree, our algorithm also take $O(n)$ time. \\

\subsection{Weighted $(k+1)$-edge Connectivity Augmentation}

Optimal version of weighted $(k+1)$-edge connectivity augmentation problem in cographs is formally defined as follows:
\begin{center}
\minibox[frame]{
\emph{\textbf{Instance:}} A $k$-edge connected cograph $G$ and a weight function $w:E(\overline{G})\rightarrow R^{+}$\\
\emph{\textbf{Solution:}} An augmentation set $E_{wca}$ such that $W_{wca}=\sum\limits _{(u,v)\in E_{wca}}w(u,v)$ is minimum and $G\cup E_{wca}$ is a\\ $(k+1)$-edge connected graph
}
\end{center}

For a connected weighted graph $G$, a minimum weighted set of edges $E_{c}^{w}\subseteq E(G)$ forms an \emph{minimum weighted edge cover} if every vertex of $G$ is incident with at least one edge in $E_{c}^{w}$. For the graph $G$, let $\rho _w(G)$ denote the weight of the minimum weighted edge cover. For a disconnected graph $G$, let $G_1,G_2,\ldots ,G_k,k\geq 2$ be the connected components. For a trivial component $G_i$ and $V(G_i)=\{x\}$, let $\rho _w(G_i)=w(x,y)$, where $w(x,y)=min\{w(x,y)\forall y\in \overline{N}_{G}(x)\}$. Then, we define $\rho _w(G)=\sum\limits_{i}\rho _w(G_i)$.

\begin{lemma}
Let $G$ be a $k$-connected cograph and let $X$ denote the set of $k$-degree vertices in $G$. Then, any weighted $(k+1)$-edge connectivity augmentation set $E_{wca}$ is such that $W_{wca}\geq \rho _w(\overline{G}[X])$.
\end{lemma}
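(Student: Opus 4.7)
\smallskip

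\noindent\emph{Proof plan.} The argument will parallel the unweighted case (Lemma \ref{leca}) but track weights rather than cardinalities. The plan is to first invoke Lemma \ref{minedgsep} to conclude that the minimum edge separator of $G \cup E_{wca}$ equals $\delta(G \cup E_{wca})$; consequently, to obtain $(k+1)$-edge connectivity it is necessary that $\delta(G \cup E_{wca}) \geq k+1$. Since $X$ is precisely the set of vertices whose degree in $G$ equals the current minimum $k$, every $x \in X$ must be incident to at least one edge of $E_{wca}$. The set $E_{wca}$ is drawn from $E(\overline{G})$, so the edges of $E_{wca}$ incident to $X$ form a cover of $X$ using only non-edges of $G$.

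\smallskip

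\noindent Next I would decompose the analysis along the connected components $C_1, C_2, \ldots, C_r$ of $\overline{G}[X]$. A useful structural observation here is that if $x, x'$ lie in different components of $\overline{G}[X]$, then $\{x,x'\} \in E(G)$, so no edge of $E_{wca}$ can have both endpoints in distinct components. Hence, the contributions to $E_{wca}$ needed to cover different components of $\overline{G}[X]$ are disjoint, and it suffices to prove a per-component lower bound and then sum them.

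\smallskip

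\noindent For a trivial component $C_i = \{x\}$, the vertex $x$ must receive at least one augmentation edge, whose other endpoint necessarily lies in $\overline{N}_G(x)$; the cheapest such edge has weight at least $\min\{w(x,y) : y \in \overline{N}_G(x)\} = \rho_w(C_i)$, matching the extended definition of $\rho_w$ for trivial components. For a non-trivial component $C_i$, the edges of $E_{wca}$ with both endpoints in $C_i$ are precisely the edges of $E_{wca}$ that can use the $\overline{G}$-edges internal to $C_i$; these edges must cover all of $V(C_i)$, so their total weight is at least $\rho_w(\overline{G}[C_i])$ by definition of the minimum weighted edge cover. Summing the two contributions over $i$ gives $W_{wca} \geq \sum_i \rho_w(C_i) = \rho_w(\overline{G}[X])$.

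\smallskip

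\noindent The main obstacle I anticipate is the non-trivial component case: one must rule out ``saving weight'' by covering a vertex of $C_i$ using an augmentation edge that leaves $X$ rather than an internal edge of $C_i$. I would handle this by exploiting the cograph structure of $G$ together with the characterization of $X$-components established above, arguing that the minimum weighted cover of $C_i$ by edges of $\overline{G}$ incident to $V(C_i)$ is realized within $\overline{G}[C_i]$ itself; this parallels the count-based argument used in Lemma \ref{leca}, where covering via external edges is never beneficial because each external edge covers only one $X$-vertex. Once this per-component bound is established, the summation over components yields the stated inequality and completes the proof. \hfill\(\qed\)
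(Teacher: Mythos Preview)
Your overall strategy mirrors the paper's: reduce to a degree condition via Lemma~\ref{minedgsep}, then bound component by component in $\overline{G}[X]$. You even put your finger on the right pressure point---the possibility of covering some $x\in C_i$ with a cheap ``external'' edge $\{x,y\}$, $y\notin X$---which the paper's own proof simply asserts away. The difficulty is that your proposed resolution cannot be carried out: the claim that ``the minimum weighted cover of $C_i$ by edges of $\overline{G}$ incident to $V(C_i)$ is realised within $\overline{G}[C_i]$ itself'' is false in the weighted setting. The parallel to Lemma~\ref{leca} breaks precisely here: in the unweighted case an external edge covers one $X$-vertex while an internal edge may cover two, so trading external for internal never increases the \emph{count}; there is no such monotonicity for \emph{weights}.

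Concretely, take $G=(\overline{K_2}\cup K_2)+K_3$ on $\{a,b\}\cup\{c,d\}\cup\{e,f,g\}$ (so $a,b$ are isolated in the left factor and $cd$ is its only edge). Then $\delta(G)=3$, $X=\{a,b\}$, and $\overline{G}[X]$ is the single edge $ab$, so $\rho_w(\overline{G}[X])=w(a,b)$. With $w(a,b)=100$ and $w(a,c)=w(b,d)=1$, the set $E_{wca}=\{ac,bd\}$ has total weight $2$, raises every degree to at least $4$, and one checks directly (for every $\emptyset\neq A\subsetneq V$, $d(A)=\sum_{v\in A}d(v)-2e(A)\ge 4$) that $G\cup E_{wca}$ is $4$-edge connected. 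Hence $W_{wca}=2<100=\rho_w(\overline{G}[X])$, so the step you flagged is a genuine gap rather than a technicality, and the inequality as stated does not hold. The paper's proof shares exactly this gap; a correct lower bound would have to admit external edges, e.g.\ by replacing $\rho_w(\overline{G}[C_i])$ with the minimum-weight cover of $V(C_i)$ using all $\overline{G}$-edges incident to $V(C_i)$, not just those inside $C_i$.
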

\begin{proof}
Similar to the proof of Lemma \ref{leca}, to make $G$ a $(k+1)$-connected graph, we must increase degree of every $k$-degree vertex atleast by one. If $\overline{G}[X]$ is connected, then to increase degree of every $k$-degree vertex atleast by one we must augment edges with atleast $\rho _w(\overline{G}[X])$ weight. And if $\overline{G}[X]$ is disconnected, and the connected component is trivial, then we must add an edge with least weight from that vertex to some non-adjacent vertex in $G$. If the component is non-trivial, then we must augment edges with atleast weight of minimum weighted edge cover in that component. Therefore, we must augment edges with total weight of $\rho _w(\overline{G}[X])$. Hence, any weighted $(k+1)$-edge connectivity augmentation set has atleast $\rho _w(\overline{G}[X])$ weight. This completes the proof of the lemma. \hfill\(\qed\)
\end{proof}

\subsubsection{Outline of the Algorithm}
Our algorithm first finds the set $X$ containing all the $k$-degree vertices in $G$. If $\overline{G}[X]$ is connected, it finds minimum weighted edge cover in $\overline{G}[X]$ and adds all the edges in the edge cover to the augmentation set. If it is disconnected, it traverses through each connected component. If the connected component is trivial, it augments an edge with minimum weight between that vertex and some non-adjacent vertex in $G$. If the connected component is non-trivial, it finds minimum weighted edge cover and augments all the edges in the edge cover.

\subsubsection{The Algorithm}\hfill \\ \\
We now present an algorithm for weighted $(k+1)$-edge connectivity augmentation and further give proof of correctness of the algorithm.

\begin{algorithm}[H]
\caption{Weighted $(k+1)$-edge Connectivity Augmentation of a Cograph}
\label{weca}
\begin{algorithmic}[1]
\STATE{\textbf{Input:} $k$-connected cograph $G$, weight function $w:E(\overline{G})\rightarrow R^{+}$}
\STATE{\textbf{Output:} $(k+1)$-connected graph $H$ of $G$}
\STATE{Let $X$ be the set of $k$-degree vertices in $G$}
\STATE{\textbf{if} $\overline{G}[X]$ is connected \textbf{then}}
\STATE{\hspace{0.5cm}Find minimum weighted edge cover $E_c^w$ in $\overline{G}[X]$}
\STATE{\hspace{0.5cm}$\forall \{u,v\}\in E_c^w$, augment the edge $\{u,v\}$ to $G$ and update $E_{wca}$}

\begin{comment}
\algstore{3}
\end{algorithmic}
\end{algorithm}
\begin{algorithm}[ht]
\begin{algorithmic}[1]
\algrestore{3}
\end{comment}

\STATE{\textbf{else}}
\STATE{\hspace{0.5cm}\textbf{for} each connected component $G_{i}$ in $\overline{G}[X]$}
\STATE{\hspace{1cm}\textbf{if} $|G_{i}|=1$ and $x\in V(G_i)$ \textbf{then}}
\STATE{\hspace{1.5cm}Find a vertex $y\in \overline{N}_{G}(x)$ and augment the edge $\{x,y\}$ to $G$ and update $E_{wca}$}
\STATE{\hspace{1cm}\textbf{else}}
\STATE{\hspace{1.5cm}Find minimum weighted edge cover $E_c^w$ in $G_{i}$}
\STATE{\hspace{1.5cm}$\forall \{u,v\}\in E_c^w$, augment the edge $\{u,v\}$ to $G$ and update $E_{wca}$}
\STATE{Output the augmented graph $H$}
\end{algorithmic}
\end{algorithm}

\subsubsection{Proof of correctness of Algorithm \ref{weca}}
The algorithm finds the set $X$ containing the set of $k$-degree vertices in $G$ in \emph{Step 1}. If $\overline{G}[X]$ is connected, it finds minimum weighted edge cover and adds all the edges to the augmentation set in \emph{Steps 4-6}. This ensures degree of all vertices in $X$ is increased atleast by one. If $\overline{G}[X]$ is disconnected, it traverses through all connected components. If the component is trivial, it augments the edge with least weight from that vertex to a vertex in $G$ in \emph{Steps 9-10}. If the component is non-trivial, it finds minimum weighted edge cover in that component and adds those edges in \emph{Steps 12-13} which implies degree of all vertices in that component is also increased atleast by one. Since minimum weighted edge cover problem is open in cographs, we use the fastest general graph minimum weighted edge cover algorithm \cite{gabow} that runs in $O(mn+n^2logn)$ time. Thus, time complexity of Algorithm \ref{weca} is $O(mn+n^2logn)$.
%\end{comment}

\section{Some NP-hard Problems in Cographs}
\label{dpres}
\noindent In this section, we present a generic framework using dynamic programming paradigm to solve three optimization problems; the longest path, Steiner path and minimum leaf spanning tree problems.  We work with the {\em parse tree} of a cograph.  The parse tree is similar to a cotree, which is a binary tree and helps in the design of dynamic programming based algorithms.

In our approach, we traverse the parse tree in post order traversal and maintain some states at each node in the parse tree which we update recursively.  Our main idea is to find an optimal solution at every node in the parse tree by combining the optimal solutions of its children as we traverse the parse tree.  Throughout this section, let $T$ denote the parse tree constructed from the input cograph $G$. In each case study, the update at a node $v$ is done recursively depending upon whether $v$ is a leaf node, $v$ is labelled 0 or $v$ is labelled 1.   We shall present the process in the respective sections to update the states in all the three cases.  When the algorithm terminates, we have the final solution to the problem stored at the root node of the parse tree.
\subsection{The Longest Path Problem}
Hamiltonian path (cycle) is a well-known problem in graph theory with many practical applications in the field of computing.  Given a connected graph $G$, the Hamiltonian path (cycle) problem asks for a spanning path (cycle) in $G$.   This problem is NP-complete in general and in special graphs such as chordal graphs and chordal bipartite graphs.   Polynomial-time algorithms for this problem are known in interval graphs \cite{ioannidou}, cocomparability graphs \cite{mertzios} and bipartite permutation graphs \cite{uehara}.  In this paper, we present a polynomial-time algorithm for the longest path problem which is a generalization of the Hamiltonian path problem.

For a cograph $G$, we work with the parse tree $T$.  For a node $v$ in $T$, $T_v$ denotes the subtree rooted at $v$ and $G_v$ denotes the underlying cograph corresponding to $T_v$.  We maintain two states $P_v$ and $U_v$ for every node $v \in T$, where $P_v$ is the longest path in the graph $G_v$.  While updating $P_v$, we make use of paths generated by recursive subproblems. The paths that are not used for updating $P_v$ are included in $U_v$ which may be used later for updating ancestors of $v$ in $T$. Let $P_1=(x_1,\ldots ,x_{|P_1|})$, $U_1=(Q_1,Q_2,\ldots, Q_{|U_2|})$ and $P_2=(y_1,\ldots ,y_{|P_2|})$, $U_2=(R_1,R_2,\ldots, R_{|U_2|})$ denote the states w.r.t. the first and second child of $v$, respectively in $T$. Let $r_{jk}$ denote the $k$th vertex in the path $R_j$ in $U_2$. The states $P_v$ and $U_v$ are updated as follows.
\begin{enumerate}
\item When $v$ is a leaf node, $P_v$ contains the vertex $v$ and $U_v$ is empty.

\item When $v$ is labelled 0, without loss of generality, let $|P_1|\geq |P_2|$. Now, $P_v$ contains the path $P_1$ and $U_v$ contains the set of paths in  $U_1$, $U_2$ and $P_2$. Let $U_v=(S_1,S_2,\ldots ,S_{|U_v|})$ be the ordering of the paths in $U_v$ such that $|S_1|\geq |S_2|\geq \ldots \geq |S_{|U_v|}|$.

\item When $v$ is labelled 1, without loss of generality, assume $|U_1|\geq |U_2|$. Initialize $P_v$ to $\emptyset$ and assume all paths in $U_1$ and $U_2$ are {\em uncovered} initially.  A path in $U_i$ is said to be {\em covered} if it is considered as part of update.  Let $a$ and $b$ denote the number of paths uncovered in $U_1$ and $U_2$, respectively w.r.t. $P_v$.  Let $c$ denote the number of vertices uncovered in $P_2$ w.r.t. $P_v$. Firstly, we extend the path $P_v$ by concatenating a path in $U_1$ and a vertex of a path in $U_2$, that is, the end point of a path in $U_1$ is attached to a vertex of a path in $U_2$.  We do this alternately (a path in $U_1$ and a vertex in a path of $U_2$ by preserving the order of vertices in the path in $U_2$) until $a=b+1$ or $b=0$.  Once we exhaust vertices in a path in $U_2$, the next path in $U_2$ is considered.  If $a=b+1$, then we extend $P_v$ by concatenating a path in $U_1$ and a path in $U_2$ preserving the order until $a=0$. Then, we extend $P_v$ by adding $P_2$, and further extend by adding $P_1$.  Otherwise, we extend the path $P_v$ by concatenating a path in $U_1$ and a vertex in $P_2$ preserving the order until $a=0$ or $c=0$.  Similar to the above, while extending we alternate between a path in $U_1$ and a vertex of a path in $U_2$.  If $c=0$, we further extend the path $P_v$ by including $P_1$ and $P_2$.
\item At the end, the update is done for the root node, $P_v$ which stores the longest path in the input cograph $G$.
\end{enumerate}
\subsubsection{The Algorithm}\hfill \\ \\
We shall now present an algorithm for finding a longest path in a cograph and further prove that our algorithm is optimal.
\begin{algorithm}[ht]
\caption{Longest path in a cograph}
\label{lp}
\begin{algorithmic}[1]
\STATE{\textbf{Input:} A cograph $G$, parse tree $T$}
\STATE{\textbf{Output:} The longest path $P$ in $G$}
\STATE{Update is done by visiting nodes in $T$ in post order traversal}
\STATE{\textbf{if} $v$ is a leaf node \textbf{then}}
	\STATE{\hspace{0.5cm}$P_v=(v)$ and $U_v=\emptyset$}
\STATE{\textbf{else}}
	\STATE{\hspace{0.5cm}\texttt{/* Let $P_1=(x_1,\ldots ,x_{|P_1|})$, $U_1=(Q_1,Q_2,\ldots, Q_{|U_2|})$ and $P_2=(y_1,\ldots ,y_{|P_2|})$, $U_2=(R_1,R_2,\ldots, R_{|U_2|})$ denote the states w.r.t. the first and second child of $v$, respectively in $T$. Let $r_{jk}$ denote the $k$th vertex in the path $R_j$ in $U_2$. */}}
	\STATE{\hspace{0.5cm}\textbf{if} $v$ is labelled 0 \textbf{then}}
		\STATE{\hspace{1cm}$P_v\leftarrow P_1$ \texttt{/* Assume $P_1=max(P_1,P_2)$ */}}
		\STATE{\hspace{1cm}$U_v\leftarrow U_1\cup U_2\cup P_2$. Let $U_v=(S_1,S_2,\ldots ,S_{|U_v|})$ be the ordering of the paths in $U_v$ such that $|S_1|\geq |S_2|\geq \ldots \geq S_{|U_v|}$}
	\STATE{\hspace{0.5 cm}\textbf{else}}
		\STATE{\hspace{1cm}Initialize $P_v=\emptyset$, $a=|U_1|$, $b=|U_2|$, $c=|P_2|$ and $i=j=k=1$ \texttt{/* Assume $|U_1|\geq |U_2|$ */}}
		\STATE{\hspace{1cm}\textbf{while} $a>b+1$ and $b>0$}
		\STATE{\hspace{1.5 cm}$P_v=(P_v,Q_i,r_{jk})$; $i=i+1$; $k=k+1$; $a=a-1$}
		\STATE{\hspace{1.5cm}\textbf{if} $k=|R_j|+1$ \textbf{then} $j=j+1$; $k=1$; $b=b-1$}
		\STATE{\hspace{1cm}\textbf{if} $a=b+1$ \textbf{then} $P_v=(P_v,Q_i,R_j,Q_{i+1},R_{j+1},\ldots ,Q_{|U_2|},R_{|U_2|},Q_{|U_1|},P_2,P_1)$ and $U_v=\emptyset$}
			\STATE{\hspace{1cm}\textbf{else}}
			\STATE{\hspace{1.5cm}\textbf{while} $i<=|U_1|$ and $c>0$}
		\STATE{\hspace{2cm}$P_v=(P_v,Q_i,y_c)$; $i=i+1$; $c=c-1$}
		\STATE{\hspace{1.5cm}$P_v=(P_v,P_1,P_2)$ and $U_v=(Q_i,Q_{i+1},\ldots ,Q_{|U_1|})$}
\STATE{Output $P_v$, the longest path in $G$}

\end{algorithmic}
\end{algorithm}

\subsubsection{Proof of correctness of Algorithm \ref{lp}}
To show that our algorithm indeed outputs the longest path in $G$, it is enough if we show that for every node $v\in T$, $P_v$ gives the longest path in the graph $G_v$ and $U_v$ contains the paths generated by recursive subproblems that are not used for updating $P_v$. We shall prove the claim by induction on the height of $T_v$, $T$ rooted at vertex $v$. \\
\emph{Basis Step:} When $v$ is a leaf node, the claim is true.\\
\emph{Induction Hypothesis:} Assume that the claim is true for a parse tree $T$ of height $h \leq k$.  Let $v$ be a node at height $k+1$, $k \geq 0$.  Let $G_1$ and $G_2$ denote the subgraphs induced by the leaves in the subtrees rooted at the children of the node $v$ in $T$. Let $P_1$ and $P_2$ be the longest paths given by our algorithm in $G_1$ and $G_2$, respectively. Let $U_1$ and $U_2$ contains the paths generated by recursive subproblems that are not used for updating $P_1$ and $P_2$, respectively. We shall now prove that the claim is true for the node $v$.\\
\emph{Induction Step:} Let $P_1=(x_1,\ldots ,x_{|P_1|})$, $P_2=(y_1,\ldots ,y_{|P_2|})$. Let $U_1=\{Q_1,Q_2,\ldots, Q_{|U_2|}\}$ and $U_2=\{R_1,R_2,\ldots, R_{|U_2|}\}$. Let $r_{jk}$ denote the $k$th vertex in the path $R_j$. When $v$ is labelled 0, $G_1$ and $G_2$ are not connected, so $max(P_1,P_2)$ (say $P_1$) is the longest path in $G_v$. Thus, the longest path given by our algorithm is correct in this case.  Further, $U_v$ contains the paths in $U_1$, $U_2$ and $P_2$. All the paths in $U_v$ are sorted in the decreasing order of their lengths. Note that none of the vertices in $U_1$ and $U_2$ are connected to $r_{j1}$ and $r_{j|R_j|}$ because of the node $v$ labelled 0. Also, none of the vertices in $U_1$ and $U_2$ are not connected to $x_1$, $x_{|P_1|}$ and $y_1$, $y_{|P_2|}$, respectively. \\
When $v$ is labelled 1, we do step analysis to prove the claim. Recall that every vertex in $G_1$ is connected to every vertex in $G_2$. Let $a$ and $b$ denote the number of uncovered paths in $P_v$ w.r.t. $U_1$ and $U_2$, respectively. In \emph{Steps 13-15} of the algorithm, we extend $P_v$ by concatenating the path $Q_i$ and the vertex $r_{jk}$ alternately until $a=b+1$ or $b=0$. If $a=b+1$, then we extend $P_v$ by concatenating the path $Q_i$ and the path $R_j$ until $a=0$. Further, we extend the path by adding $P_2$ followed by $P_1$ as in \emph{Step 16}. Therefore, $P_v$ contains the all of $V(G_v)$.  Hence, the longest path given by our algorithm is correct and $U_v$ contains the paths uncovered in $P_v$ w.r.t. $U_1$. Let $c$ be the number of vertices uncovered in the path $P_v$ w.r.t. $P_2$. Suppose $b=0$, then extend the path $P_v$ by concatenating the path $Q_i$ and the vertices in $P_2$ until $c=0$ or $a=0$ as in \emph{Steps 18-19}.  We further extend $P_v$ by
adding $P_1$ as given in \emph{Step 20}.  Now, if $a=0$, then our algorithm outputs a spanning path which implies that our algorithm is correct. Else, consider any path $P'$ of $G_{v}$. Since all paths in $U_v$ are sorted in decreasing order of lengths, to prove the claim, we show that the number of paths not part of $G_1$ with respect to $P'$ in $G_{v}$ is larger than the longest path $P_v$ enumerated by our algorithm.  Let $l$ be the number of times paths in $G_1$ and $G_2$ alternates in $P'$.  By the induction hypothesis, any path in $G_1$ must have at least $|U_1|$ paths not part of $P_v$.   Therefore, the number of paths not part of $G_1$ with respect to $P'$ $\geq |U_1|-l$ $\geq |U_1|-|G_2|$ $\geq |U_1|-|G_2|-1=a$. Thus, $|P_v|\geq |P'|$. Hence, $P_v$ is the longest path in $G_{v}$ and $U_v=U_1$.   This completes the induction argument.

\subsubsection{Trace of Algorithm \ref{lp}}\hfill \\ 
We now trace the steps of Algorithm \ref{lp} in the figure \ref{fig:lp}. We traverse through the nodes in the post order traversal in the parse tree. In this example, we traverse in the order a,b,c,1,d,2,3,e,f,4,g,h,i,5,6,7,j,k,8,l,m,9,10,n,o,11,12,13,14. While traversing, we update $P_v$ and $U_v$ as shown in the figure. For example, the states at nodes a,b,c,d,e,f,g,i,j,k,l,m,n,o are updated as per \emph{Step 5} in the algorithm as they are leaf nodes. And the states at nodes 2,3,5,7,8,9,12 are updated as per \emph{Steps 9-10}. The states at nodes 1,4,6,10,11,13,14 are updated following the \emph{Steps 12-20}.  

\begin{figure}[h!]
    \begin{center}
    \includegraphics[height=8cm,width=16cm]{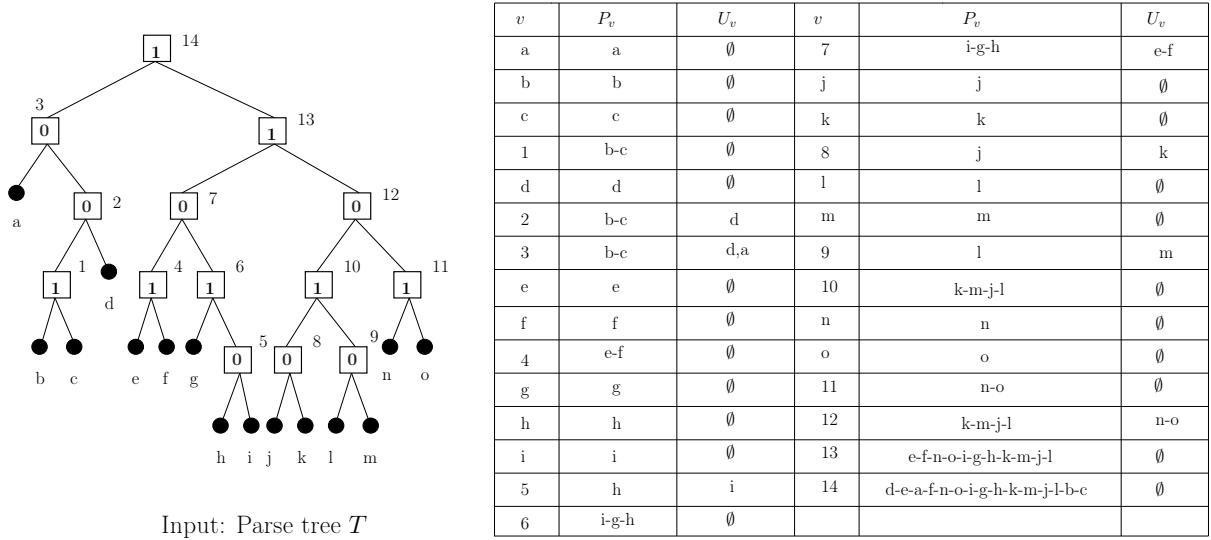}
    \caption{Trace of longest path algorithm when Hamiltonian path exists}
    \label{fig:lp}
    \end{center}    
\end{figure}

\begin{theorem}
The longest path problem in cographs is polynomial-time solvable.
\end{theorem}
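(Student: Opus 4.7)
The plan is to derive the theorem as an immediate consequence of Algorithm \ref{lp} together with its correctness proof, both of which already appear in the excerpt. Since correctness has been established by induction on the height of the parse tree, what remains is a runtime analysis to verify that Algorithm \ref{lp} is polynomial. I would begin by recalling that the parse tree $T$ of a cograph $G$ on $n$ vertices has $O(n)$ nodes and can be constructed in linear time via the cotree recognition algorithm of \cite{corneilrecog}, so visiting every node of $T$ in post-order already costs only $O(n)$ overhead.

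Next, I would bound the work expended at a single node $v$ of $T$. At a leaf the update is $O(1)$. At a node labelled $0$, the operation is to choose $P_v=\max(P_1,P_2)$ and to form $U_v = U_1 \cup U_2 \cup \{P_2\}$ sorted in non-increasing order of length; if the child lists $U_1$ and $U_2$ are kept sorted, this can be done by a single merge followed by one insertion, costing $O(|U_1|+|U_2|)$. At a node labelled $1$, the interleaving loop in \emph{Steps 13--20} consumes, at each iteration, one path of $U_1$ together with a single vertex of $U_2$ or of $P_2$, and then a final concatenation appends $P_1$ and $P_2$; thus the work at $v$ is proportional to $|V(G_v)|$.

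Summing over all nodes of $T$, every vertex of $G$ can appear in the states $P_v, U_v$ of at most $O(n)$ ancestors, so the total work is bounded by $O(n^2)$. Combined with the correctness argument already in place, this yields a polynomial-time algorithm for the longest path problem in cographs, proving the theorem.

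The main obstacle here is not in the complexity bookkeeping, which is essentially routine, but in trusting that the greedy interleaving scheme at labelled-$1$ nodes genuinely produces a longest path rather than a heuristic near-optimum; that invariant was the real content of the induction step in the preceding correctness proof, and once it is in hand the theorem follows cleanly from the node-by-node cost bounds above.
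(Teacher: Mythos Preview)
Your proposal is correct and matches the paper's approach: the paper's own proof is the single sentence ``Follows from the discussion presented in the previous section,'' i.e., it simply invokes Algorithm~\ref{lp} and its correctness proof exactly as you do. Your explicit $O(n^2)$ runtime accounting is a welcome addition that the paper omits entirely.
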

\begin{proof}
Follows from the discussion presented in the previous section.\hfill\(\qed\)
\end{proof}

A simple path that visits all the vertices in $G$ is called the Hamiltonian path. A Hamiltonian cycle is a cycle that visits all the vertices in $G$ exactly once.  For every $v \in T$, let $G_1$ and $G_2$ denote the subgraphs induced by the children of the node $v$ in $T$. Let $P_1$ and $P_2$ be the longest paths given by our algorithm in $G_1$ and $G_2$, respectively. Let $U_1$ and $U_2$ contains the paths generated by recursive subproblems that are not used for updating $P_1$ and $P_2$, respectively.  We now give a necessary and sufficient condition for the existence of Hamiltonian path (cycle) in cographs.  Our result is based on the states defined as part of the longest path algorithm.   Note that the Hamiltonian path is a special case of the longest path problem.   We shall work with the notation used in this section to present our results.
\begin{theorem}
Let $G$ be a cograph and $T$ be its corresponding parse tree rooted at $v$. $G$ has a Hamiltonian path if and only if $|U_1| \leq |G_2|$.
\end{theorem}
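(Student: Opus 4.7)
My plan is to reduce Hamiltonicity to a counting argument on the pieces produced by Algorithm \ref{lp} at the two children of the root. Because $G$ is a connected cograph, the root $v$ of $T$ is labelled $1$, so every vertex of $G_1$ is adjacent to every vertex of $G_2$; by the inductive correctness of Algorithm \ref{lp}, the pair $(P_i, U_i)$ partitions $V(G_i)$ into exactly $|U_i|+1$ vertex-disjoint paths. The technical heart of the argument is the auxiliary claim that $|U_i|+1$ equals the \emph{minimum} path cover number of $G_i$; I would establish this by a separate induction on the height of the parse tree, tracking how the minimum path cover number combines under disjoint union (at a $0$-labelled node) and under join (at a $1$-labelled node). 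As in the algorithm, assume without loss of generality that $|U_1| \ge |U_2|$.

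For the forward direction, suppose $G$ admits a Hamiltonian path $H$. I would decompose $H$ into its maximal $G_1$-segments and $G_2$-segments; consecutive segments lie on different sides, so the two kinds alternate along $H$. The $G_1$-segments form a vertex-disjoint path cover of $G_1$, hence by the minimality claim there are at least $|U_1|+1$ of them. Because the segment counts on the two sides differ by at most $1$ in any alternation, $H$ contains at least $|U_1|$ $G_2$-segments, each contributing at least one vertex of $V(G_2)$; this forces $|G_2| \ge |U_1|$.

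For the backward direction, suppose $|U_1| \le |G_2|$. I would verify by direct inspection of Steps 13--20 of Algorithm \ref{lp} that the algorithm outputs a spanning path. The construction interleaves the $|U_1|+1$ $G_1$-pieces $P_1, Q_1, \ldots, Q_{|U_1|}$ with separator vertices drawn first from the paths in $U_2$ and, once $U_2$ is exhausted, from $P_2$; the hypothesis $|U_1| \le |G_2| = |P_2| + \sum_{j} |R_j|$ is precisely what guarantees a separator is available between every pair of consecutive $G_1$-pieces, and since $v$ is labelled $1$ every such concatenation is realised by an edge of $G$.

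The main obstacle is the minimum-path-cover lemma invoked in the forward direction: although Algorithm \ref{lp} produces the explicit decomposition of $V(G_i)$ into $|U_i|+1$ paths, ruling out a strictly smaller cover of $G_i$ requires a separate induction on the parse tree of $G_i$ rather than a mere restatement of the algorithm's correctness. Once this lemma is in hand, both directions reduce to the counting-and-construction scheme sketched above.
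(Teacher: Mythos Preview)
Your proposal is correct, and it takes a genuinely different route from the paper's own proof. The paper handles necessity simply by appealing to the already-established optimality of Algorithm~\ref{lp}: if $|U_1|>|G_2|$, then the path $P_v$ returned at the root leaves at least one piece of $U_1$ uncovered, and since $P_v$ is provably a longest path, $G$ cannot be Hamiltonian. You instead isolate a structural statement---that $|U_i|+1$ is the minimum path-cover number of $G_i$---and run an independent segment-counting argument on an arbitrary Hamiltonian path. That buys you a self-contained proof that does not lean on the full correctness analysis of the algorithm, and it makes explicit \emph{why} the threshold is $|G_2|$; the price is the extra inductive lemma you flag as the main obstacle. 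For sufficiency both arguments coincide: you and the paper alike trace the interleaving performed in Steps~13--20 and observe that $|U_1|\le|G_2|$ is exactly the condition that enough $G_2$-vertices are available as separators. One small remark: your minimum-path-cover lemma is not new work so much as a repackaging of the counting inequality already buried inside the paper's inductive correctness proof of Algorithm~\ref{lp} (the passage comparing $P_v$ to an arbitrary path $P'$), so the induction you propose should go through by essentially the same bookkeeping.
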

\begin{proof}
\emph{Necessity: } If, on the contrary, assume that $|U_1|=|G_2|+1$.  In our algorithm, we concat paths in $U_1$ with vertices in $G_2$ alternately while constructing the longest path.  We further extend the path by including $P_1$.  Then, we have one path uncovered in $P_v$ w.r.t. $U_1$ which is a contradiction to the definition of Hamiltonian path.\\
\emph{Sufficiency: }$|U_1| \leq |G_2|$. We concatenate paths in $U_1$ with the vertices in $G_2$ alternately while constructing the longest path. Thus, we cover all the paths in $U_1$. Finally, we concatenate $P_1$ to the path. Therefore, it follows that $G_v$ has a Hamiltonian path.\qed
\end{proof}
\begin{theorem}
Let $G$ be a cograph and $T$ be its corresponding parse tree rooted at $v$.  $G$ has a Hamiltonian cycle if and only if $|U_1| \leq |G_2|-1$.
\end{theorem}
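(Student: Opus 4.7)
The plan is to mirror the Hamiltonian path characterization, tightening the bound by one unit to account for the edge that closes the cycle. Let $v$ be the root of $T$ and write $G_1, G_2$ for the subgraphs attached to the two children of $v$, under the algorithmic convention $|U_1| \geq |U_2|$. Since any Hamiltonian cycle forces $G$ to be connected, $v$ must be labelled $1$, so every vertex of $G_1$ is adjacent to every vertex of $G_2$. The structural ingredient I would extract from the correctness proof of Algorithm \ref{lp} is the auxiliary claim that for each node $u \in T$, the minimum cardinality of a vertex-disjoint path cover of $V(G_u)$ is exactly $|U_u|+1$, with $P_u$ realizable as one member of such an optimum cover. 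I would state and prove this separately by induction on the height of $T_u$, handling the three cases (leaf, label-$0$, label-$1$) in parallel with the update rules of the algorithm.

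For necessity, assume $G$ has a Hamiltonian cycle $C$ and traverse it once. Its vertices decompose into maximal runs alternating between $G_1$ and $G_2$; let $k \geq 1$ be the common number of runs on each side. The $k$ runs lying in $G_1$ constitute a vertex-disjoint path cover of $G_1$, so the auxiliary claim forces $k \geq |U_1|+1$. Each of the $k$ runs in $G_2$ uses at least one vertex, so $k \leq |G_2|$. Combining these inequalities yields $|U_1| \leq |G_2|-1$. For sufficiency, assume $|U_1| \leq |G_2|-1$. Take the $|U_1|+1$ paths $P_1, Q_1, \ldots, Q_{|U_1|}$ that partition $V(G_1)$ and arrange them in a cyclic order, creating $|U_1|+1$ gaps to be filled. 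By the auxiliary claim, $V(G_2)$ admits a path cover of size $|U_2|+1 \leq |U_1|+1$; since $|U_1|+1 \leq |G_2|$, this cover can be refined to have exactly $|U_1|+1$ paths by repeatedly splitting a path of length at least two into a singleton plus the remainder. Placing one such $G_2$-path in each gap yields a cyclic sequence visiting every vertex of $G$ exactly once, and every cross-gap consecutive pair has one endpoint in $G_1$ and one in $G_2$, so the join at $v$ supplies all required edges. The construction is therefore a Hamiltonian cycle.

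The main obstacle is the clean inductive proof of the auxiliary claim identifying $|U_u|+1$ with the exact minimum path cover number of $G_u$. The upper bound is immediate from the algorithm's output, but the matching lower bound in the label-$1$ case requires a careful counting argument much like the one embedded in the induction step for Algorithm \ref{lp}, and this is what I expect to absorb the bulk of the writing effort. Once that claim is isolated, both directions of the theorem reduce to clean pigeonhole-style arithmetic on the segment counts $k$, $|U_1|+1$, and $|G_2|$.
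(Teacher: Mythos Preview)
Your argument is correct, and it is considerably more complete than what the paper offers. The paper's own proof consists of the single line ``\emph{Necessity:} The proof is similar to the Hamiltonian path problem,'' with no separate sufficiency argument; the reader is expected to transplant the algorithm-centric reasoning from the preceding theorem, where necessity is argued by assuming $|U_1|=|G_2|$ and observing that the specific concatenation procedure of Algorithm~\ref{lp} then fails to close up.

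Your route is genuinely different. By isolating the auxiliary claim that $|U_u|+1$ equals the minimum vertex-disjoint path cover number of $G_u$, you replace the paper's reliance on ``what the algorithm would do'' with a structural invariant, and both directions of the theorem then become clean counting: a Hamiltonian cycle yields $k$ maximal $G_1$-runs and $k$ maximal $G_2$-runs with $|U_1|+1\le k\le |G_2|$, while sufficiency is an explicit interleaving of two path covers of matched size. This buys you a self-contained proof that does not depend on tracing the while-loops, and it makes transparent why the cycle bound is exactly one tighter than the path bound (the cycle has the same number of runs on each side, whereas a path may have one more on the larger side). The cost is that the auxiliary claim must be proved in full; the label-$1$ lower bound is essentially the inequality $\sum s_i\le |G_2|+q$ you would extract from any path cover of size $q$, which is straightforward once stated but is only implicit in the paper's induction step. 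One small edge case worth noting in the sufficiency direction: when $|U_1|+1=1$ and $|G_2|=1$ the construction degenerates to a $2$-cycle, so you should record the standing hypothesis $|V(G)|\ge 3$.
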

\begin{proof}
\emph{Necessity:} The proof is similar to the Hamiltonian path problem. \qed

%There exists a Hamiltonian cycle in $G_v$. This implies we covered all the paths in $U_1$ and last vertex we concat to the longest path must be from $G_2$ as none of the paths in $U_1$ and $P_1$ are connected. On the contrary, assume that $|U_1|=|G_2|$. In our algorithm, we concat paths in $U_1$ with the vertices in $G_2$ alternatively to the longest path. Thus, last vertex we concat is from $G_2$. We then concatenate $P_1$ to the path. No vertex in $U_1$ is connected to $P_1$, which is a contradiction to the definition of Hamiltonian cycle.\\
%\emph{Sufficiency: }$|U_1|\leq |G_2|-1$. We concatenate paths in $U_1$ with the vertices in $G_2$ alternatively to the longest path until $|U_1|=0$. Then, we concat $P_1$ to the path. Finally, we concatenate the uncovered vertex in $G_2$ to the path. Since the first vertex in the path is from $G_1$ and the last vertex is from $G_2$, it follows that $G_v$ has Hamiltonian cycle.\hfill\(\qed\)
\end{proof}
\subsection{The Steiner Path Problem}
While the longest path problem is one form of generalization of the Hamiltonian problem, there is one more generalization which we call the {\em Steiner path problem} that asks the following; 
\begin{center}
\minibox[frame]{ 
\emph{\textbf{Instance:}} A connected cograph $G$, a terminal set $X \subseteq V(G)$\\
\emph{\textbf{Question:}} Does there exist a path containing all of $R$ with the least number of vertices from $V(G) \setminus R$.?
}
\end{center}
For a cograph $G$, let $T$ denote its corresponding parse tree.  For a node $v \in T$, let $T_v$ denotes the subtree rooted at $v$ and $G_v$ denotes the cograph corresponding to the cotree $T_v$. For every non-root node $v$ in $T$, we maintain three states $S_v$, $U_v$ and $L_v$, where $S_v$ is the longest path in the graph $G[X \cap V(G_v)]$.  We make use of the paths generated by the recursive subproblems while updating $S_v$. The paths that are not used for updating $S_v$ are included in $U_v$ which may be used for updating the states at the nodes that come later in the postorder traversal of $T$.  Vertices in $X \cap V(G_v)$ but not part of $S_v$ and $U_v$ are included in $L_v$. Finally, when we reach the root node, we add additional vertices to the longest path in the graph $G[X]$ using the states updated at the children of the root node and the path is stored in $S_v$. Let $S_1=(x_1,\ldots ,x_{|S_1|})$, $U_1=(Q_1,Q_2,\ldots, Q_{|U_2|})$ and $S_2=(y_1,\ldots ,y_{|S_2|})$, $U_2=(R_1,R_2,\ldots, R_{|U_2|})$ denote the states w.r.t. the first and second child of $v$, respectively in $T$. Let $r_{jk}$ denote the $k$th vertex in the path $R_j$ in $U_2$. Let $L_2=\{z_1,z_2,\ldots ,z_{|L_2|}\}$. We update the states $S_v$, $U_v$ and $L_v$ as follows.
\begin{enumerate}
\item When $v$ is a leaf node and $v\in X$, $S_v$ contains the vertex $v$ and $U_v$ and $L_v$ are empty. If $v\notin X$, $S_v$ and $U_v$ are empty and $L_v$ contains the vertex $v$.

\item When $v$ is labelled 0, without loss of generality, assume $|S_1|\geq |S_2|$. Now, $S_v$ contains the path $S_1$ and $U_v$ contains the paths in $U_1$, $U_2$ and $S_2$. Let $U_v=(P_1,P_2,\ldots ,P_{|U_v|})$ be the ordering of the paths in $U_v$ such that $|P_1|\geq |P_2|\geq \ldots \geq |P_{|U_v|}|$.

\item When $v$ is labelled 1, without loss of generality, assume $|U_1|\geq |U_2|$. Initialize $S_v$ to $\emptyset$ and assume all paths in $U_1$ and $U_2$ are {\em uncovered} initially. A path in $U_i$ is said to be {\em covered} if it is considered as part of updating $S_v$. Let $a$ and $b$ be the number of paths uncovered in $U_1$ and $U_2$, respectively w.r.t. $S_v$. Let $c$ and $d$ denote the number of vertices uncovered in $S_2$ and $L_2$, respectively w.r.t. $S_v$. If $|U_1|=|U_2|$, then if $|S_1|=0$ and $|S_2|=0$, we extend the path $S_v$ by concatenating a path in $U_2$ and a path $U_1$ alternately (a path in $U_1$ and a path of $U_2$ by preserving the order of paths in $U_1$ and $U_2$) until $a=0$. Otherwise, we extend $S_v$ by concatenating a path in $U_1$ and a path $U_2$ alternately preserving the order of paths in $U_1$ and $U_2$ until $b=0$. Now, $U_v$ is empty and $L_v$ contains the vertices in $L_1$ and $L_2$. If $|U_1|>|U_2|$, we consider the following cases.

\begin{enumerate}
\item If $|S_1|=0$, we extend the path $S_v$ by concatenating a path in $U_1$ and a vertex of a path in $U_2$, that is, the end point of a path in $U_1$ is attached to a vertex of a path in $U_2$ alternately preserving the order of paths in $U_1$ and $U_2$ until $a=b+1$ or $b=0$. We consider the next path in $U_2$ once all the vertices in the path are covered in $S_v$. If $a=b+1$, then we extend $S_v$ by concatenating a path in $U_1$ and a path in $U_2$ preserving the order until $a=0$. Then, we extend $S_v$ by adding $S_2$. Otherwise, if $|S_2|\neq 0$, we extend $S_v$ by concatenating a path uncovered in $U_1$ w.r.t. $S_v$ preserving the order and a vertex in the path $S_2$ alternately until $a=0$ or $c=0$.  If $|S_2|=0$, we extend $S_v$ by adding another uncovered path in $U_1$ w.r.t. $S_v$.

\item Otherwise, if $|S_2|=0$, we extend the path $S_v$ by concatenating a path in $U_1$ and a vertex of a path in $U_2$, that is, the end point of a path in $U_1$ is attached to a vertex of a path in $U_2$ alternately preserving the order of paths in $U_1$ and $U_2$ until $a=b$ or $b<0$. If $a=b$, we concatenate uncovered paths in $U_1$ and $U_2$ w.r.t. $S_v$ alternately until $a=0$ and $b=0$. Further, we extend the path by adding $S_1$ to $S_v$.

\item If $|S_1|\neq 0$ and $|S_2|\neq 0$, we extend the path $S_v$ by concatenating a path in $U_1$ and a vertex of a path in $U_2$, that is, the end point of a path in $U_1$ is attached to a vertex of a path in $U_2$ alternately preserving the order of paths in $U_1$ and $U_2$ until $a=b+1$ or $b<0$. If $a=b+1$, then we extend $S_v$ by concatenating a path in $U_1$ and a path in $U_2$ preserving the order until $a=0$. Finally, we concatenate $S_2$ and then $S_1$ to $S_v$. Otherwise, we extend $S_v$ by concatenating the paths uncovered in $U_1$ w.r.t. $S_v$ and a vertex in the path $S_2$ alternately until $a=0$ or $c=0$.
\item After the above three cases, if $a\neq 0$ and $v$ is a root node, we extend $S_v$ by concatenating the paths uncovered in $U_1$ w.r.t. $S_v$ and a vertex in $L_2$ alternately until $a=0$ or $d=0$. If $|S_1|\neq 0$, we extend the path $S_v$ by concatenating the path $S_1$. Otherwise, we add an uncovered path in $U_1$ to $S_v$. Now, $U_v$ contains the paths uncovered in $U_1$ w.r.t. $S_v$ and $L_v$ contains the vertices in $L_1$ and $L_2$.
\end{enumerate}

\item Finally, when we update the root node, $S_v$ contains the Steiner path in $G$, if it exists.
\end{enumerate}

\subsubsection{The Algorithm}\hfill \\ \\
We shall now present an algorithm for finding the Steiner path in a cograph, if it exists, and further give a proof of correctness for the algorithm.

\begin{algorithm}[hp]
\caption{The Steiner path in a cograph}
\label{spa}
\begin{algorithmic}[1]
\STATE{\textbf{Input:} A cograph $G$ and its parse tree $T$, the terminal set $X \subset V(G)$}
\STATE{\textbf{Output:} The Steiner path $S$ of $G$, if it exists}
\STATE{Perform post order traversal on $T$}
\STATE{\textbf{if} $v$ is a leaf node}
	\STATE{\hspace{0.5cm}If $v\in X$, then $S_v=(v)$, $U_v=\emptyset$ and $L_v=\emptyset$. Otherwise, $S_v=\emptyset$, $U_v=\emptyset$ and $L_v=v$}
\STATE{\textbf{else}}
	\STATE{\hspace{0.5cm}\texttt{/* Let $S_1=(x_1,\ldots ,x_{|S_1|})$, $U_1=(Q_1,Q_2,\ldots, Q_{|U_2|})$ and $S_2=(y_1,\ldots ,y_{|S_2|})$, $U_2=(R_1,R_2,\ldots, R_{|U_2|})$ denote the states w.r.t. the first and second child of $v$, respectively in $T$. Let $r_{jk}$ denote the $k$th vertex in the path $R_j$ in $U_2$. Let $L_2=\{z_1,z_2,\ldots ,z_{|L_2|}\}$. */}}
	\STATE{\hspace{0.5cm}\textbf{if} $v$ is labelled 0}
		\STATE{\hspace{1cm}$S_v\leftarrow S_1$ \texttt{/* Assume $S_1=max(S_1,S_2)$ */}}
		\STATE{\hspace{1cm}$U_v\leftarrow U_1\cup U_2\cup S_2$ and $L_v\leftarrow L_1\cup L_2$. Let $U_v=(S_1,S_2,\ldots ,S_{|U_v|})$ be the ordering of the paths in $U_v$ such that $|S_1|\geq |S_2|\geq \ldots \geq S_{|U_v|}$}
	\STATE{\hspace{0.5cm}\textbf{elseif} $|U_1|=|U_2|$}
		\STATE{\hspace{1cm}\textbf{if} $|S_1|=0$ and $|S_2|\neq 0$ \textbf{then} $S_v=(R_1,Q_1,\ldots ,R_{|U_2|},Q_{|U_1|},S_2)$, $U_v=\emptyset$ and $L_v\leftarrow L_1\cup L_2$}
		\STATE{\hspace{1cm}\textbf{else} $S_v=(Q_1,R_1,\ldots ,Q_{|U_1|},R_{|U_2|},S_1,S_2)$, $U_v=\emptyset$ and $L_v\leftarrow L_1\cup L_2$}
	\STATE{\hspace{0.5cm}\textbf{else}}
		\STATE{\hspace{1cm}Initialize $S_v=\emptyset$, $a=|U_1|$, $b=|U_2|$, $c=|S_2|$, $d=|L_2|$ and $i=j=k=1$ \texttt{/* Assume $|U_1|\geq |U_2|$ */}}
		\STATE{\hspace{1cm}\textbf{if} $|S_1|=0$}
			\STATE{\hspace{1.5cm}\textbf{while} $a>b+1$ and $b>0$}
				\STATE{\hspace{2cm}$S_v=(S_v,Q_i,r_{jk})$; $i=i+1$; $k=k+1$; $a=a-1$}
				\STATE{\hspace{2cm}\textbf{if} $k=|R_j|+1$  \textbf{then} $j=j+1$; $k=1$; $b=b-1$}
			\STATE{\hspace{1.5cm}\textbf{if} $a=b+1$ \textbf{then} $S_v=(S_v,Q_i,R_j,\ldots ,Q_{|U_2|},R_{|U_2|},Q_{|U_1|},S_2)$}
			\STATE{\hspace{1.5cm}\textbf{elseif} $|S_2|\neq 0$}
				\STATE{\hspace{2cm}\textbf{while} $i<=|U_1|$ and $c>0$}
					\STATE{\hspace{2.5cm}$S_v=(S_v,Q_i,y_c)$; $i=i+1$; $c=c-1$}
			\STATE{\hspace{1.5cm}\textbf{else} $S_v=(S_v,Q_i)$}
		\STATE{\hspace{1cm}\textbf{elseif} $|S_2|=0$}
			\STATE{\hspace{1.5cm}\textbf{while} $a>b$ and $b\geq 0$}
				\STATE{\hspace{2cm}$S_v=(S_v,Q_i,r_{jk})$; $i=i+1$; $k=k+1$; $a=a-1$}
				\STATE{\hspace{2cm}\textbf{if} $k=|R_j|+1$ \textbf{then} $j=j+1$; $k=1$; $b=b-1$}
			\STATE{\hspace{1.5cm}\textbf{if} $a=b$  \textbf{then} $S_v=(S_v,Q_i,R_j,\ldots ,Q_{|U_1|},R_{|U_2|},S_1)$}
		\STATE{\hspace{1cm}\textbf{else}}
			\STATE{\hspace{1.5cm}\textbf{while} $a>b+1$ and $b\geq 0$}
				\STATE{\hspace{2cm}$S_v=(S_v,Q_i,r_{jk})$; $i=i+1$; $k=k+1$; $a=a-1$}
				\STATE{\hspace{2cm}\textbf{if} $k=|R_j|+1$  \textbf{then} $j=j+1$; $k=1$; $b=b-1$}
			\STATE{\hspace{1.5cm}\textbf{if} $a=b+1$ \textbf{then} $S_v=(S_v,Q_i,R_j,\ldots ,Q_{|U_2|},R_{|U_2|},Q_{|U_1|},S_2,S_1)$}
			\STATE{\hspace{1.5cm}\textbf{else}}
				\STATE{\hspace{2cm}\textbf{while} $i<=|U_1|$ and $c>0$}
					\STATE{\hspace{2.5cm}$S_v=(S_v,Q_i,y_c)$; $i=i+1$; $c=c-1$}
		\STATE{\hspace{1cm}\textbf{if} $a\neq 0$}
			\STATE{\hspace{1.5cm}\textbf{if} $v=R$}
				\STATE{\hspace{2cm}\textbf{while} $i<=|U_1|$ and $d>0$}
					\STATE{\hspace{2.5cm}$S_v=(S_v,Q_i,z_d)$; $i=i+1$; $d=d-1$}
			\STATE{\hspace{1.5cm}\textbf{if} $|S_1|\neq 0$ \textbf{then} $S_v=(S_v,S_1)$}
			\STATE{\hspace{1.5cm}\textbf{else} $S_v=(S_v,Q_i)$}
		\STATE{\hspace{1cm}$U_v\leftarrow U_1$ and $L_v\leftarrow L_1\cup L_2$}
\STATE{If $|U_v|=0$, output $S_v$, the Steiner path in $G$. Otherwise, print the Steiner path does not exist.}
\end{algorithmic}
\end{algorithm}

\subsubsection{Trace of Algorithm \ref{spa}}\hfill \\ \\
We now trace the steps of Algorithm \ref{spa} in the Figure \ref{fig:sp}.  We traverse through the nodes in the post order traversal of the parse tree. In this example, we traverse in the order $a,b,c,1,d,e,f,2,3,4,5,g,h,6,i,j,k,7,8,9,l,m,10,n,o,11,12,p,q,13,14,15,16$. While traversing, we update $S_v$, $U_v$ and $L_v$ as shown in the figure.  For example, the states at nodes $a,b,c,d,e,f,g,i,j,k,l,m,n,o,p,q$ are updated as per \emph{Step 5} in the algorithm as they are leaf nodes and states at nodes $3,4,5,7,9,10,11,14$ are updated as per \emph{Steps 9-10}. The states at nodes $1,2,6,8,12,13,15,16$ are updated as in \emph{Steps 11-44}.

\begin{figure}[h!]

    \begin{center}
    \includegraphics[height=8cm,width=16cm]{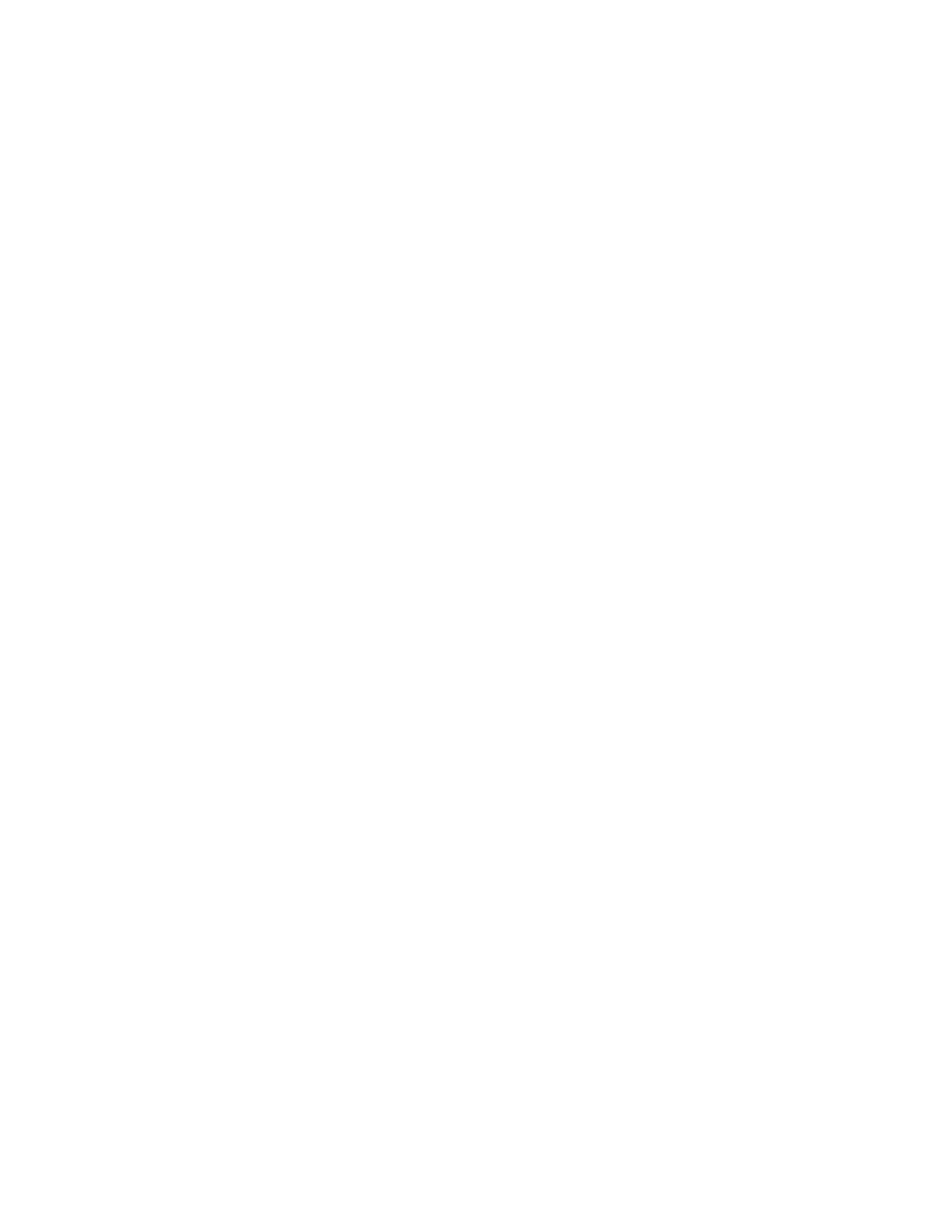}
    \caption{Trace of the Steiner path algorithm when the Steiner path exists}
    \label{fig:sp}
    \end{center}
\end{figure}

\begin{theorem}
Given a cograph $G$ and a set of terminal vertices $X$, the algorithm \ref{spa} outputs the Steiner path in $G$, if it exists.
\end{theorem}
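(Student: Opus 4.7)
The plan is to prove the theorem by induction on the height of the parse subtree $T_v$, maintaining an invariant in the spirit of the correctness argument for Algorithm~\ref{lp}. Specifically, I will show that at every non-root node $v$ the triple $(S_v, U_v, L_v)$ satisfies the following: $S_v$ is a longest path in $G_v$ whose vertices all lie in $X \cap V(G_v)$; $U_v$ is a collection of pairwise vertex-disjoint paths in $G_v$, each lying entirely in $X \cap V(G_v)$, which together with $S_v$ covers all terminals of $G_v$ and whose number is as small as possible among decompositions of $X \cap V(G_v)$ into a longest terminal path plus additional terminal paths; and $L_v$ collects the non-terminal vertices of $V(G_v)$ still available as connectors at ancestor join nodes. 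At the root, a Steiner path exists in $G$ if and only if $|U_v|=0$ after the final update, in which case $S_v$ is the desired path.

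The base case (a leaf node) is immediate from Step~5. For the inductive step, consider a node $v$ labelled $0$: the sets $V(G_1)$ and $V(G_2)$ are completely non-adjacent, so every terminal path in $G_v$ lies entirely on one side; hence the longest terminal path is $\max(S_1, S_2)$, the remaining terminal paths are deferred into $U_v$ (sorted by length to preserve the invariant), and the loose non-terminal pool becomes $L_1 \cup L_2$. This is exactly what Steps~9--10 perform.

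When $v$ is labelled $1$, every vertex of $G_1$ is adjacent to every vertex of $G_2$, and the algorithm greedily stitches each successive path from $U_1$ onto $S_v$ via a single connector vertex drawn first from the paths of $U_2$, next from $S_2$, and (only at the root) finally from $L_2$. The case distinction on $|U_1|$ versus $|U_2|$ and on whether $|S_1|$ or $|S_2|$ is empty ensures that either $U_1$ is fully absorbed, yielding a single long terminal path stored in $S_v$, or the leftover paths of $U_1$ are preserved in $U_v$ for stitching higher up in $T$. The principal obstacle is proving the optimality of this greedy interleaving. For this I will argue, by an exchange argument analogous to the one in the correctness proof of Algorithm~\ref{lp}, that any candidate terminal path $P'$ in $G_v$ decomposes into alternating maximal sub-paths in $G_1$ and $G_2$; the induction hypothesis bounds the number of $G_i$-pieces $P'$ can use by $|U_i|+1$ when $S_i \neq \emptyset$ and by $|U_i|$ otherwise. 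A direct count then shows that whenever the greedy strategy is forced to leave a path of $U_1$ uncovered, every competing $P'$ must leave at least as many uncovered, so $|P' \cap X| \leq |S_v \cap X|$.

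Finally, to close the argument at the root, I rely on the fact that $L_2$ is consulted only there: a non-terminal vertex spent as a connector at an internal join node could block a potentially better concatenation higher up, so the algorithm correctly postpones the use of non-terminals until the last available join. Consequently, after the root update, $|U_v|=0$ is equivalent to the greedy procedure having fused all terminals of $G$ into the single path $S_v$, which is then a Steiner path using the fewest non-terminals possible (because every non-terminal included in $S_v$ is inserted only to merge two otherwise disconnected terminal paths of $U_1$). Conversely, $|U_v|>0$ means that even after exhausting every potential connector the terminals remain split across at least two path components of any decomposition, proving that no Steiner path exists in $G$ and justifying the \emph{Otherwise} branch of Step~45.
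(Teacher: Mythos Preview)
Your proposal follows essentially the same approach as the paper: induction on the height of $T_v$ with the invariant that $S_v$ is a longest terminal-only path in $G_v$, $U_v$ collects the remaining terminal paths, and $L_v$ the non-terminals, the labelled-$1$ case being handled by the same greedy interleaving whose optimality is argued via the alternating-pieces count borrowed from the correctness proof of Algorithm~\ref{lp}. Your formulation of the invariant is slightly sharper (you insist that $|U_v|$ is minimal among such decompositions) and you supply an explicit rationale for deferring the use of non-terminals to the root, but the skeleton and every key step coincide with the paper's own argument.
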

\begin{proof}
To prove the theorem, we show that $S_v$ gives the Steiner path in $G$ if it exists when $v$ is a root node and for every non-root node $v$, $S_v$ gives the longest path in the graph $G[V(G_v)\cap X]$. Further, $U_v$ contains the paths that are not used for updating $S_v$, and $L_v$ contains $V(G_v)\setminus X$. We shall prove the claim by induction on the height of $T_v$, $T$ rooted at vertex $v$.\\
\emph{Basis Step:} When $v$ is a leaf node, the claim is true.\\
\emph{Induction Hypothesis:} Assume that the claim is true for a parse tree $T$ of height $h \leq k$. Let $v$ be a node at height $k+1$, $k \geq 0$.  Let $G_1$ and $G_2$ denote the subgraphs induced by the leaves in the subtrees rooted at the children of the node $v$ in $T$. Let $S_1$ and $S_2$ be the longest paths given by our algorithm in $G[V(G_1)\cap X]$ and $G[V(G_2)\cap X]$, respectively. Let $U_1$ and $U_2$ contains the paths generated by recursive subproblems that are not used for updating $S_1$ and $S_2$, respectively. Let $L_2$ be the set containing the vertices in $V(G_2)\setminus X$. We shall now prove that the claim is true for the node $v$.\\
\emph{Induction Step:} Let $S_1=(x_1,\ldots ,x_{|S_1|})$, $S_2=(y_1,\ldots ,y_{|S_2|})$. Let $U_1=\{Q_1,Q_2,\ldots, Q_{|U_2|}\}$ and $U_2=\{R_1,R_2,\ldots, R_{|U_2|}\}$. Let $r_{jk}$ denote the $k$th vertex in the path and $R_j$. Let $L_2=\{n_1,n_2,\ldots ,n_{|L_2|}\}$. When $v$ is labelled 0, as $G_1$ and $G_2$ are not connected, $max(S_1,S_2)$ (say $S_1$) becomes the longest path in the graph $G[V(G_v)\cap X]$, if $v$ is a non-root node. If $v$ is the root node, then clearly, the Steiner path does not exist. Hence, our algorithm is correct. Further, $U_v$ contains the paths in $U_1$, $U_2$ and $S_2$. All the paths in $U_v$ are sorted in the decreasing order of their lengths. Note that none of the vertices in $U_1$ and $U_2$ are connected to $r_{j1}$ and $r_{j|R_j|}$ because of the node $v$ labelled 0. Also, none of the vertices in $U_1$ and $U_2$ are not connected to $x_1$, $x_{|S_1|}$ and $y_1$, $y_{|S_2|}$, respectively.\\
When $v$ is labelled 1, we do step analysis to prove the claim. Recall that every vertex in $G_1$ is connected to every vertex in $G_2$. Let $a$ and $b$ denote the number of uncovered paths in $P_v$ w.r.t. $U_1$ and $U_2$, respectively. Let $c$ and $d$ denote the number of vertices uncovered in $S_2$ and $L_2$, respectively w.r.t. $S_v$. In \emph{Steps 11-13} of the algorithm, when $|U_1|=|U_2|$, we extend $S_v$ by concatenating a path in $U_2$ and a path in $U_1$ alternately until $a=0$ or vice-versa until $b=0$. This implies that all paths in $U_1$ and $U_2$ are covered. This implies that $S_v$ gives spanning path in $G[V(G_1)\cap X]$ when $v$ is a non-root node. When $v$ is the root node, the Steiner path exists in $G_v$ without adding additional vertices. Thus, our algorithm is correct. If $|U_1|\geq |U_2|$, we have three cases. In the first case, in \emph{Steps 17-18} of the algorithm, we extend $S_v$ by concatenating the path $Q_i$ and the vertex $r_{jk}$ alternately until $a=b+1$ or $b=0$. If $a=b+1$, then we extend $P_v$ by concatenating the path $Q_i$ and the path $R_j$ until $a=0$. Further, we extend the path by adding $S_2$ as in \emph{Step 20}. Therefore, $S_v$ contains the all of $V(G_v)\cap X$. Hence, the path given by our algorithm is correct. Otherwise, if $|S_2|\neq 0$, then extend the path $S_v$ by concatenating the path $Q_i$ and the vertices in $S_2$ until $c=0$ or $a=0$ as in \emph{Steps 22-23}. If $|S_2|=0$, we add an uncovered path in $U_1$ to $S_v$ in \emph{Step 24}. Now, if $a=0$, then it means we have a spanning path which implies our algorithm is correct. In the second case, we first extend the path $S_v$ by concatenating a path in $U_1$ and a vertex of a path in $U_2$ alternately until $a=b$ or $b<0$ as in \emph{Steps 27-28}. If $a=b$, we concatenate uncovered paths in $U_1$ and $U_2$ w.r.t. $S_v$ alternately until $a=0$ and $b=0$. Further, we extend the path by adding $S_1$ to $S_v$ as in \emph{Step 29}. Now, if $a=0$, then we a spanning path in $G[V(G_1)\cap X]$. In the third case, we extend the path $S_v$ by concatenating a path in $U_1$ and a vertex of a path in $U_2$ until $a=b+1$ or $b<0$ as in \emph{Steps 32-33}. If $a=b+1$, then we extend $S_v$ by concatenating a path in $U_1$ and a path in $U_2$ until $a=0$. Finally, we concatenate $S_2$ and then $S_1$ to $S_v$ in \emph{Step 34}. Otherwise, we extend $S_v$ by concatenating the paths uncovered in $U_1$ w.r.t. $S_v$ and a vertex in the path $S_2$ alternately until $a=0$ or $c=0$ in \emph{Steps 36-37}. \\
After these cases, if $v$ is a non-root node, clearly, $S_v$ contains the longest path in the graph $G[V(G_v)\cap X]$ by following the induction argument similar to the proof of Algorithm \ref{lp}. From the induction hypothesis, any path in $G$ has $a$ number of uncovered paths in $U_1$. Suppose $v$ is a root node. If $a=0$, then clearly, $G$ is an yes instance of Steiner path with no additional vertices. If $a\neq 0$ and if $v$ is a root node, we must add minimum $a-1$ additional vertices to cover all paths in $U_1$. Since we must add vertices from $G_2$, we must add minimum $|L_2|$ additional vertices as in \emph{Steps 40-41}. This completes the induction argument.\hfill\(\qed\)
\end{proof}

Given a terminal set $X$ and a graph $G$, Steiner cycle asks for a cycle containing all of $X$ with a minimum number of additional vertices. We shall now give conditions for the existence of Steiner path (cycle) in a cograph, $G$. Let $S_1$, $U_1$ and $S_2$, $U_2$ denote the states w.r.t. the first and second child of $v$, respectively in $T$.

\begin{theorem}
Given a cograph $G$ and a terminal set $X$, there exists a Steiner path in the graph $G$ if and only if $|U_1|\leq |G_2|$.
\end{theorem}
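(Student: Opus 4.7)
The plan is to mirror the proof of the Hamiltonian path theorem and to invoke the correctness of Algorithm \ref{spa} established above. Since $G$ is connected, the root $v$ of the parse tree is labelled $1$, so $G$ is the join $G_1 + G_2$ of the subgraphs induced by the two subtrees rooted at the children of $v$, and the states $S_1, U_1, L_1$ at the first child and $S_2, U_2, L_2$ at the second child are as specified in the Steiner path section. Every vertex of $G_1$ is adjacent to every vertex of $G_2$, which is the structural fact that drives both directions.

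For sufficiency, I would assume $|U_1|\leq |G_2|$ and trace the root-level execution of Algorithm \ref{spa}. The algorithm alternately concatenates a path $Q_i$ from $U_1$ with a single vertex drawn from $G_2$, pulled first from the paths of $U_2$, then from $S_2$, and, because $v$ is the root, finally from $L_2$ via Steps 40--41. Each concatenation is legal because of the $G_1$--$G_2$ join. Since at least $|U_1|$ distinct bridge vertices are available in $G_2$, the loop terminates with every path of $U_1$ covered, $U_v=\emptyset$, and $S_v$ a single path that contains every terminal of $G$. The preceding correctness theorem then certifies that $S_v$ is a Steiner path.

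For necessity I would argue the contrapositive. Suppose a Steiner path $P$ of $G$ exists. Deleting the vertices of $G_2$ from $P$ decomposes $P$ into $k$ maximal $G_1$-subpaths, and consecutive $G_1$-subpaths inside $P$ are separated by at least one vertex of $G_2$, so $k-1\leq |G_2|$. Each $G_1$-subpath is a path in $G_1$, and collectively they cover $X\cap V(G_1)$. By the induction that underlies Algorithm \ref{spa}'s correctness, any such vertex-disjoint decomposition of $X\cap V(G_1)$ by paths in $G_1$ needs at least $|U_1|+1$ parts, hence $k\geq |U_1|+1$. Chaining the two inequalities gives $|U_1|+1\leq k\leq |G_2|+1$, i.e.\ $|U_1|\leq |G_2|$, as required.

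The main obstacle will be formalising the inductive lower bound invoked in the necessity step, namely that $|U_v|+1$ is a lower bound on the number of vertex-disjoint paths in $G_v$ needed to cover $X\cap V(G_v)$, even when such paths are allowed to use non-terminal vertices from $L_v$. The preceding correctness proof establishes only that $S_v$ is the longest terminal-only path; strengthening it to the required minimum-path-cover statement calls for an exchange argument at each internal parse-tree node, handled separately in the labelled-$0$ and labelled-$1$ cases and checked against the degenerate configurations $S_1=\emptyset$ or $S_2=\emptyset$ that already receive special treatment in Algorithm \ref{spa}.
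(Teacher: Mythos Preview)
Your sufficiency argument is essentially the paper's: both simply trace the root-level execution of Algorithm~\ref{spa}, using the join structure to alternate $U_1$-paths with vertices drawn from $G_2$.

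For necessity the two approaches diverge. The paper does \emph{not} argue combinatorially about an arbitrary Steiner path; it assumes $|U_1|=|G_2|+1$, observes that the algorithm then leaves a $U_1$-path uncovered, and invokes the already-proved correctness theorem for Algorithm~\ref{spa} (``the algorithm outputs a Steiner path if one exists'') to conclude that no Steiner path exists. In other words, the paper's necessity is just the contrapositive of the correctness theorem, with no independent counting.

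Your direct counting route is more informative, but the inductive lower bound you flag as the ``main obstacle'' is not merely hard to formalise --- it is false as stated. You need that any family of vertex-disjoint paths in $G_1$ covering $X\cap V(G_1)$ has at least $|U_1|+1$ members \emph{even when non-terminal vertices from $L_1$ are allowed on those paths}. Take $G_1=\{u\}+\overline{K_3}$ with terminals $a,b,c$ and non-terminal $u$. The algorithm, which builds $S_1$ and $U_1$ using only terminals, returns $S_1=(a)$ and $U_1=\{(b),(c)\}$, so $|U_1|+1=3$; yet the two $G_1$-paths $(a,u,b)$ and $(c)$ already cover all terminals. Hence your inequality $k\ge |U_1|+1$ fails, and with $G_2=\{z\}$ the Steiner path $(a,u,b,z,c)$ shows that the decomposition of an actual Steiner path into $G_1$-segments can have strictly fewer than $|U_1|+1$ pieces. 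So the exchange argument you propose cannot succeed in this generality; the correctness proof of Algorithm~\ref{spa} only controls terminal-only paths in $G[V(G_v)\cap X]$ and gives you nothing about paths that borrow vertices from $L_1$.

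If you want a self-contained necessity, you must either strengthen the invariant carried through the parse tree to a genuine minimum-path-cover bound that accounts for $L$-vertices at every node (not just at the root), or abandon the direct route and, like the paper, lean on the correctness theorem for Algorithm~\ref{spa} as a black box.
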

\begin{proof}
\emph{Necessity: }There exists a Steiner path in $G$. This implies we covered all the paths in $U_1$ and $U_2$ while updating $S_v$. On the contrary, assume that $|U_1|=|G_2|+1$. In our algorithm, we concatenate paths in $U_1$ with the vertices in $G_2$ alternately to the Steiner path. We further extend the path till $S_1$. Then, we have one path uncovered in $S_v$ w.r.t $U_1$ which is a contradiction to the definition of Steiner path.\\
\emph{Sufficiency: }$|U_1| \leq |G_2|$. We concatenate paths in $U_1$ with the vertices in $G_2$ alternately to $S_v$. Thus, we cover all paths in $U_1$. Finally, we concatenate $S_1$ to the path. Therefore, it follows that the graph $G_v$ has a Steiner path. \qed
\end{proof}

\begin{theorem}
Given a cograph $G$ and a terminal set $X$, there exists a Steiner cycle in the graph $G$ if and only if $|U_1|\leq |G_2|-1$.
\end{theorem}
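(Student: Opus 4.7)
The plan is to adapt the characterization proof of the preceding Steiner-path theorem, accounting for the extra $G_2$-vertex that a cycle must spend on its closure. The argument splits into two directions, each driven by the alternating segmentation of a closed walk across the root join of the parse tree. Throughout, $v$ denotes the root of $T$ and $G_1,G_2$ denote the cographs induced by the leaves of the two subtrees at its children, with $v$ labelled~$1$.

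For necessity, I would assume that $G$ admits a Steiner cycle $C$ and derive $|U_1|\le |G_2|-1$ directly. Since every $V(G_1)$--$V(G_2)$ edge is present, $C$ decomposes into an alternating sequence of maximal $G_1$-subpaths and maximal $G_2$-subpaths whose counts coincide because $C$ is cyclic. By the inductive correctness of Algorithm~\ref{spa}, the terminals of $V(G_1)$ are exactly partitioned by the vertex-disjoint paths $\{S_1\}\cup U_1$, and no two distinct paths of $U_1$ admit an internal $G_1$-merge (otherwise they would have been merged by the subproblem at the child of $v$). Hence $C$ must contain at least $|U_1|$ distinct $G_2$-subpaths to separate consecutive $U_1$-paths, together with one further $G_2$-subpath to close the cycle around $S_1$, so $|V(G_2)|\ge |U_1|+1$, which is the desired bound.

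For sufficiency, assume $|U_1|\le |G_2|-1$. I would first invoke the Steiner-path theorem (which needs only $|U_1|\le |G_2|$) to obtain, via the construction of Algorithm~\ref{spa}, a Steiner path $P$ of $G_v$ whose interleaving uses at most $|U_1|$ vertices of $G_2$ as glue. Because we have one vertex of $V(G_2)$ to spare, pick an unused $w\in V(G_2)$. Since $v$ is labelled~$1$, $w$ is adjacent to every vertex of $V(G_1)$, and in particular to the two endpoints of $P$ that the canonical construction places in $V(G_1)$. Adding the two edges joining $w$ to these endpoints closes $P$ into a cycle that covers $X$ and whose only non-terminals are those already in $P$ plus, possibly, $w$ itself; this is minimum by the necessity count.

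The main obstacle will be a uniform treatment of endpoint locations in the sufficiency direction. Algorithm~\ref{spa} branches into several subcases (steps 11--44) depending on whether $|S_1|$, $|S_2|$, or $|L_2|$ is zero, and each produces a Steiner path whose endpoints may land in $V(G_1)$ or $V(G_2)$. A per-subcase verification is needed to show that a surplus vertex on the opposite side of the join is always free and is simultaneously adjacent to both endpoints of the constructed Steiner path; symmetry between the roles of $G_1$ and $G_2$ keeps the case work manageable, and once this bookkeeping is settled, the two directions combine to yield the stated equivalence. \qed
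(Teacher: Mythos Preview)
Your proposal is essentially correct and follows the same alternation-and-closure idea as the paper, but with two minor differences in execution worth noting. For necessity, the paper argues by contradiction (assuming $|U_1|=|G_2|$ and observing that the algorithm's path then begins and ends in $G_1$ with no adjacency between its extremities), whereas you give a direct segment-counting argument on an arbitrary Steiner cycle; your version is the cleaner of the two, though your claim that two $U_1$-paths ``would have been merged by the subproblem'' glosses over the fact that the subproblem only builds terminal-only paths while the cycle may pass through non-terminal $G_1$-vertices---the paper is equally loose here. For sufficiency, the paper does \emph{not} re-use the Steiner-path output and then check endpoint locations; instead it builds a fresh path $(Q_1,g_1,Q_2,g_2,\ldots,Q_{|U_1|},g_{|U_1|},S_1,g_{|U_1|+1})$ with $g_i\in V(G_2)$, which by construction starts in $G_1$ and ends in $G_2$, so closure is immediate. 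This direct construction sidesteps the per-subcase endpoint verification you flagged as the main obstacle, so you can simplify your sufficiency argument accordingly.
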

\begin{proof}
\emph{Necessity: }There exists a Steiner cycle in $G_v$. This implies we covered all the paths in $U_1$ while updating the path $S_v$ and last vertex we concat to the path must be from $G_2$ as none of the paths from $U_1$ and $S_1$ are connected. On the contrary, assume that $|U_1|=|G_2|$. In our algorithm, we concat paths in $U_1$ with the vertices in $G_2$ alternately to the Steiner path. Finally, we concatenate $S_1$ to the path. No vertex in $U_1$ is connected to $S_1$, which is a contradiction to the definition of Steiner cycle.\\
\emph{Sufficiency: }$|U_1|\leq |G_2|-1$. We concatenate paths in $U_1$ with the vertices in $G_2$ alternately to $S_v$. Then, we concat $S_1$ to the path. Finally, we concatenate the uncovered vertex in $S_v$ w.r.t. $G_2$ to the path. Since the first vertex in the path is from $G_1$ and the last vertex is from $G_2$, it follows that $G$ has Steiner cycle.\qed
\end{proof}

\subsection{The Minimum Leaf Spanning Tree problem}
We next consider another optimization problem, namely the minimum leaf spanning tree problem in cographs which is also a generalization of the Hamiltonian path problem.  In graph classes where the Hamiltonian path problem is polynomial-time solvable, it is natural to study the complexity of the minimum leaf spanning tree problem, which is defined as follows;
\begin{center}
\minibox[frame]{
\emph{\textbf{Instance:}} A connected cograph $G$\\
\emph{\textbf{Question:}} Does there exist a spanning tree $H$ with a minimum number of leaves in $G$
}
\end{center}
\subsubsection{Outline of the Algorithm}
Let $T_1$ and $T_2$ denote the subtrees rooted at the children of the root node in $T$. Our algorithm first runs the longest path algorithm on the subtrees $T_1$ and $T_2$. Let $P_1=(x_1,\ldots ,x_{|P_1|})$and $P_2=(y_1,\ldots ,y_{|P_2|})$ denote the states w.r.t. $T_1$ and $T_2$, respectively. Let $U_1=\{Q_1,Q_2,\ldots, Q_{|U_2|}\}$ be the state of $T_1$. Let $U_2=\{R_1,R_2,\ldots, R_{|U_2|}\}$ denote the state of $T_2$ and $r_{ij}$ denote the $j$th vertex in the path $R_i$. Without loss of generality, assume $|U_1|\geq |U_2|$. Initialize $P$ to $\emptyset$. Let $a$ and $b$ denote the number of paths uncovered in $U_1$ and $U_2$, respectively w.r.t. $P$. Let $c$ denote the number of vertices uncovered in $P$ w.r.t. $P_2$. Firstly, we extend the path $P$ by concatenating a path in $U_1$ and a vertex of a path in $U_2$ alternately by preserving the order until $a=b+1$ or $b=0$. If $a=b+1$, then we extend $P$ by concatenating the paths in $U_1$ and $U_2$ alternately preserving the order until $a=0$. Further, $H$ contains the path $P$. Otherwise, we extend the path $P$ by concatenating the paths in $U_1$ and vertices in $P_2$ alternately until $a=0$ or $c=0$. If $a=0$, then $H$ contains the path $P$. Otherwise, let $P=(v_1,v_2,\ldots ,v_k)$ and initialize $H=P$. We then concatenate $P_1$ and all the uncovered paths in $U_1$ to the $v_k$. After this procedure, $H$ contains a minimum leaf spanning tree of the cograph $G$.
\subsubsection{The Algorithm}\hfill \\ \\
We now present an algorithm for finding a minimum leaf spanning tree in a cograph and further give a proof of correctness of the algorithm.
\begin{algorithm}[H]
\caption{Minimum leaf spanning tree in a cograph}
\label{mlst}
\begin{algorithmic}[1]
\STATE{\textbf{Input:} A connected cograph $G$ and a parse tree $T$}
\STATE{\textbf{Output:} A minimum leaf spanning tree $H$}
\STATE{Compute the longest path in $T_1$ and $T_2$}
\STATE{\texttt{/* Let $P_1=(x_1,\ldots ,x_{|P_1|})$, $P_2=(y_1,\ldots ,y_{|P_2|})$. Let $U_1=\{Q_1,Q_2,\ldots, Q_{|U_2|}\}$ and  $U_2=\{R_1,R_2,\ldots, R_{|U_2|}\}$. Let $r_{jk}$ denote the $k$th vertex in the path $R_j$ in $U_2$*/}}
\STATE{Initialize $P=\emptyset$, $a=|U_1|$, $b=|U_2|$, $c=|P_2|$ and $i=j=k=1$ \texttt{/* Assume $|U_1|\geq |U_2|$ */}}
\STATE{\textbf{while} $a>b+1$ and $b>0$}
		\STATE{\hspace{0.5 cm}$P=(P,Q_i,r_{jk})$; $i=i+1$; $k=k+1$; $a=a-1$}
		\STATE{\hspace{0.5cm}\textbf{if} $k=|R_j|+1$ \textbf{then} $j=j+1$; $k=1$; $b=b-1$}
		\STATE{\textbf{if} $a=b+1$ \textbf{then} $P=(P,Q_i,R_j,Q_{i+1},R_{j+1},\ldots ,Q_{|U_2|},R_{|U_2|},Q_{|U_1|},P_2,P_1)$ and $H=P$}
			\STATE{\textbf{else}}
			\STATE{\hspace{0.5cm}\textbf{while} $i<=|U_1|$ and $c>0$}
		\STATE{\hspace{1cm}$P=(P,Q_i,y_c)$; $i=i+1$; $c=c-1$}
		\STATE{\hspace{0.5cm}\textbf{if} $i>|U_1|$ \textbf{then} $H=(P,P_1,P_2)$}
		\STATE{\hspace{0.5cm}\textbf{else}}
		\STATE{\hspace{1cm}Let $P=(v_1,v_2,\ldots ,v_k)$ and initialize $H=P$}
		\STATE{\hspace{1cm}Add the edges $\{v_k,q_{l1}\}$ and $E(Q_l)\forall i\leq l\leq |U_1|$ to $H$}
		\STATE{\hspace{1cm}Add the edges $\{v_k,x_1\}$ and $E(P_1)$ to $H$}
\STATE{Output $H$, a minimum leaf spanning tree in $G$}
\end{algorithmic}
\end{algorithm}

\subsubsection{Trace of the algorithm: (Algorithm \ref{mlst})}
We now trace the steps of Algorithm \ref{mlst} in Figure \ref{fig:mlst}. In this example, we compute the longest paths in $T_1$ and $T_2$ as in \emph{Step 3}. After updating the states from the algorithm, we find a minimum leaf spanning tree following the \emph{Steps 5-15}.

%\begin{comment}
\begin{figure}[h!]
    \begin{center}
    \includegraphics[height=13cm,width=17cm]{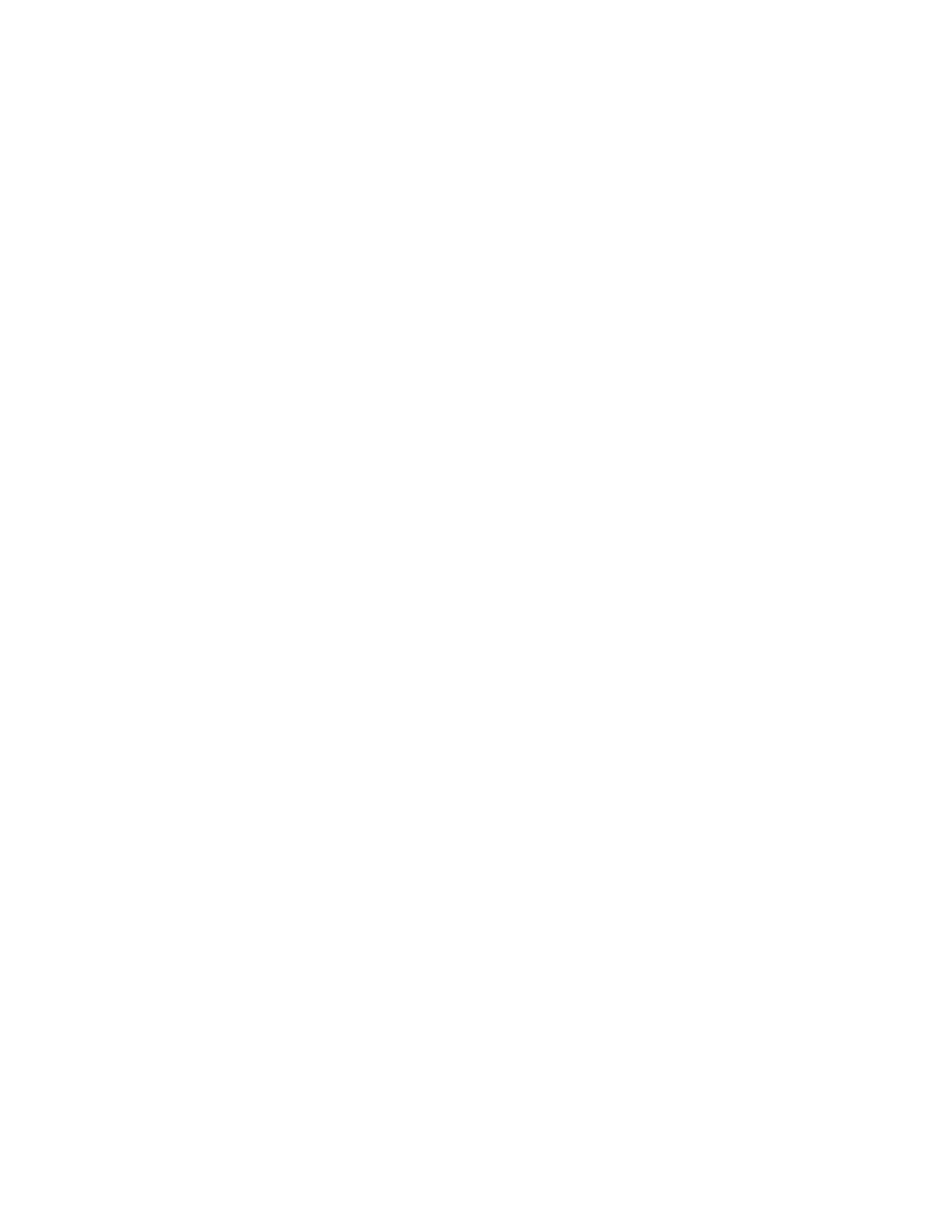}
    \caption{Trace of the minimum leaf spanning tree algorithm}
    \label{fig:mlst}
    \end{center}
\end{figure}
%\end{comment}
\subsubsection{Proof of correctness of Algorithm \ref{mlst}}
Our algorithm runs longest path algorithm on the two subtrees rooted at the children of the root node in $T$ in \emph{Step 3}. Using the states updated at the two children, we find the longest path possible in $G$ as in \emph{Steps 5-12}. If there are no uncovered paths in $U_1$, then it implies there is an yes instance of Hamiltonian path in $G$. Therefore, $H$ contains the path $P$. Otherwise, we concatenate all uncovered paths in $U_1$ and the path $P_1$ to the end vertex of the path $P$ in \emph{Steps 15-17}. Recall that the end vertex of $P$ is from $G_2$ and it is connected to all vertices in $G_1$. And none of the paths in $U_1$ and $S_1$ are connected. Thus, $H$ contains a minimum leaf spanning tree in $G$. \\ \\
{\bf Conclusions and Directions for further research:} In this paper, we have initiated the structural understanding of cographs from the perspective of minimal vertex separators.  Further, using the structural results we have enumerated all minimal vertex separators and presented polynomial-time algorithms for some connectivity augmentation problems. Subsequently, we looked at three classical problems such as Hamiltonian path (cycle), Steiner path and minimum leaf spanning tree in cographs, and presented polynomial-time algorithms for all of them.  In the context of edge connectivity augmentation, we presented polynomial algorithms without preserving cograph property. We believe that these results can be extended to preserve the cograph property.  We also believe that complexity of domination and its variants in cographs can be found using dynamic programming on the underlying parse tree.

\newpage

\end{document}